\DeclareMathOperator{\Tr}{\mathrm{Tr}}
\newtheorem{theorem}{Theorem}
\newtheorem{lemma}[theorem]{Lemma}%
\newtheorem{remark}{Remark}%
\newcommand{\argmin}{\mathop{\rm argmin}\limits}
\def\QED{\mbox{\rule[0pt]{1.5ex}{1.5ex}}}
\newcommand{\qed}{\hfill \QED}
\newcommand{\nc}{\newcommand}
\nc{\ketbra}[2]{|#1\rangle\!\langle#2|}
\nc{\bra}[1]{\langle#1|}
\nc{\ket}[1]{|#1\rangle}
\def\Label#1{\label{#1}\ [\ \text{#1}\ ]\ }
\def\Label{\label}
\newcommand{\gl}[1]{{#1}}
\begin{document}

\title{Generalized quantum Arimoto-Blahut algorithm and its application to quantum information bottleneck}

\author{Masahito~Hayashi}
\email{hmasahito@cuhk.edu.cn, masahito@math.nagoya-u.ac.jp}
\affiliation{School of Data Science, The Chinese University
of Hong Kong, Shenzhen, 518172, China}
\affiliation{International Quantum Academy (SIQA), Futian District, Shenzhen 518048, China}
\affiliation{Graduate School of Mathematics, Nagoya University, Nagoya, 464-8602, Japan}
\author{Geng Liu}
\affiliation{School of Science and Engineering, The Chinese University
of Hong Kong, Shenzhen, 518172, China}
\affiliation{International Quantum Academy (SIQA), Futian District, Shenzhen 518048, China}

\begin{abstract}
We generalize the quantum Arimoto-Blahut algorithm by Ramakrishnan et al. (IEEE Trans. IT, 67, 946 (2021)) to a function defined
over a set of density matrices with linear constraints so that
our algorithm can be applied to optimizations of quantum operations. 
This algorithm has wider applicability, and
we apply our algorithm to the quantum
information bottleneck with three quantum systems,
which can be used for quantum learning.
We numerically compare our obtained algorithm with the existing algorithm by 
Grimsmo and Still (Phys. Rev. A, 94, 012338 (2016)).
Our numerical analysis shows that our algorithm is better than their algorithm.
\end{abstract}

\keywords{optimization, quantum operation, iterative algorithm, quantum information bottleneck, quantum learning}

\maketitle

\section*{Introduction}\Label{setup}
Recently, the technologies for operating quantum systems have been developed quite rapidly.
In order to employ these technologies for a certain purpose in a more efficient way,
we need to choose an optimal operation.
That is, optimizing quantum operation is a crucial task in quantum information processing.
This problem is formulated as an optimization of a certain objective function defined for our quantum operation.
Thus, this kind of optimization is strongly desired.

As an optimization for quantum information,
quantum Arimoto-Blahut algorithm is known very well \cite{Nagaoka}.
This algorithm is the quantum version of Arimoto-Blahut algorithm \cite{Blahut,Arimoto},
and enables the optimization of the mutual information.
Recently, this algorithm has been generalized in a general form \cite{RISB}.
This algorithm is an iterative algorithm,
and converges to the global optimum under a certain condition.
However, it is given as an optimization of a function of 
input density matrices.
Hence, it cannot be applied to the optimization of quantum operations.
The first contribution of this paper is 
to generalize their algorithm to the way that contains the 
case when the objective function has a quantum operation as the input.
This algorithm can be regarded as a quantum extension of the 
algorithm obtained in \cite{Hmixture}
because our algorithm can be considered as an optimization 
over density matrices with linear constraints.
In addition, we show that 
the obtained iterative algorithm improves 
the value of the objective function. 
Since our algorithm can be applied to a function with a very general form,
it can be expected that the algorithm has wide applicability.

As a suitable example of our optimization problem,
we focus on information bottleneck, which
is known as a method to 
extract meaningful information from the original information source \cite{Tishby}.
Recently, this method has been extended to 
the quantum setting, where
the problem is given as an optimization of 
a certain function of a quantum operation \cite{Grimsmo}.
This topic has been studied in various recent papers \cite{BPP,DHW,HW,SCKW,HY}.
The method by \cite{Grimsmo} is given as an iterative algorithm 
for optimizing the given objective function.
Quantum information bottleneck is formulated 
as follows when a joint state $\rho_{XY}$ on the joint system
${\cal X}\otimes {\cal Y}$.
We choose our quantum operation ${\cal E}$ from ${\cal X}$
to our quantum memory ${\cal T}$.
Then, the value $I(T;X)-\beta I(T;Y)$ is given as a function of 
${\cal E}$, and is set to be our minimizing objective function as Fig.
\ref{bottle}.
Their iterative algorithm is given by using a necessary condition 
of the optimum value based on the derivative of the objective function.
That is, their algorithm iteratively changes our operation
to approximately satisfy the necessary condition.
However, their iterative behavior was not sufficiently studied \cite{Grimsmo} 
while it is known that their iterative algorithm
has a good behavior in special cases including a certain quantum setting as follows.

\begin{figure}[htpb]
    \centering
    \includegraphics[width = \columnwidth]{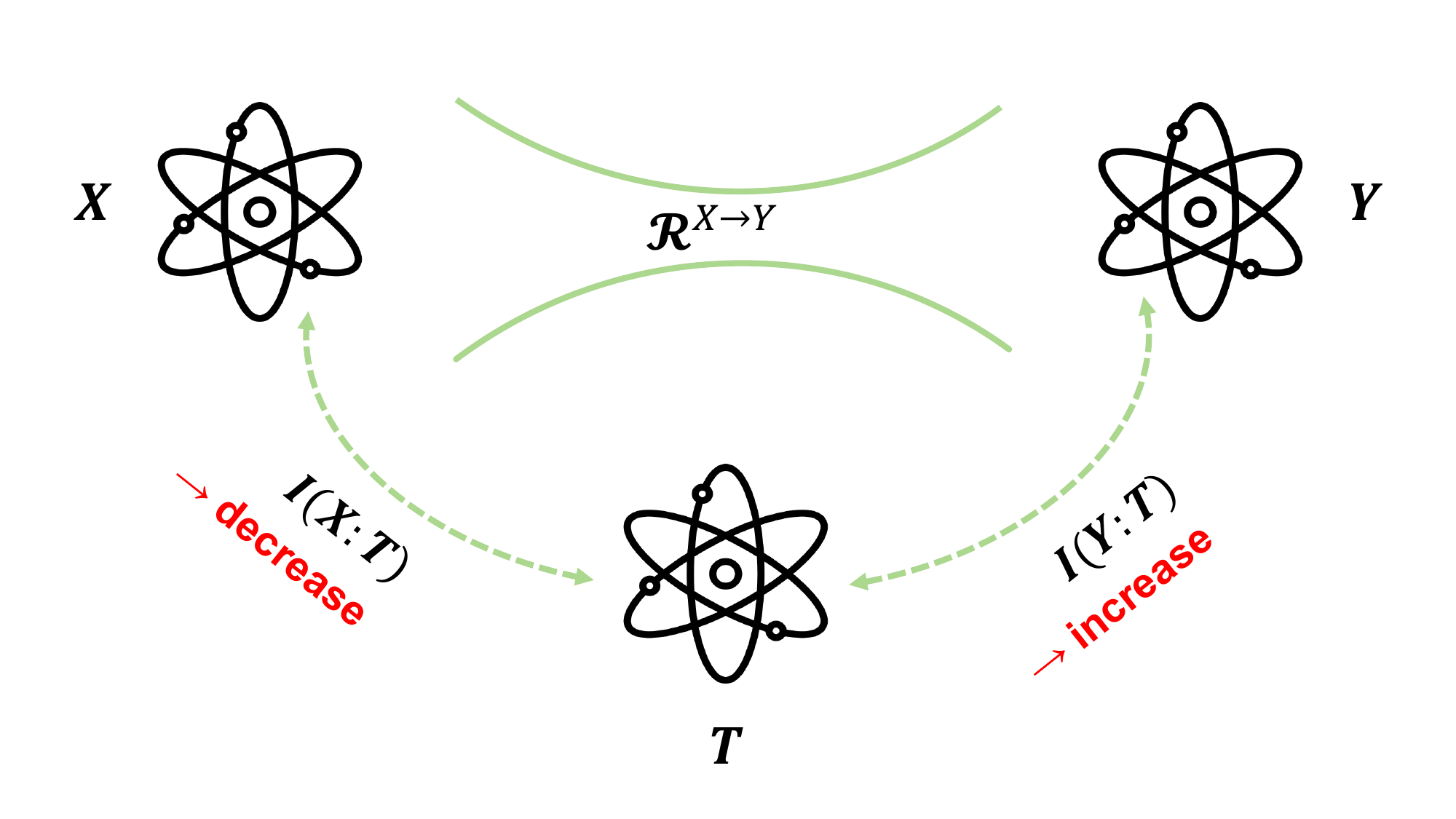}
    \caption{Idea for information bottleneck.}
    \label{bottle}
\end{figure}

First, their iterative algorithm in the classical case coincides with the classical algorithm proposed by \cite{Tishby}.
Second, the recent paper \cite{HY} applied a simple extension of the algorithm by \cite{RISB} to the case when ${\cal X}$ is the classical.
The obtained algorithm by \cite{HY} coincides with the algorithm by \cite{Grimsmo} in this special case.
That is, the algorithm by \cite{HY} coincides with the algorithm by \cite{Tishby} in the classical case.
Also, the paper \cite{HY} showed that their obtained algorithm improves the value of the objective function in each step.
Therefore, it was open whether the algorithm by \cite{Grimsmo} improves the value of the objective function in each step.

As the second contribution of our paper, 
we apply our obtained generalized quantum Arimoto-Blahut algorithm
to quantum information bottleneck even in the case with 
quantum ${\cal X}$.
Notice that the algorithm by \cite{RISB} cannot be applied in this case.
Then, we obtain our concrete iterative algorithm for this minimization problem.
Our obtained algorithm improves the value of the objective function in each step.
However, our obtained iterative step is different from 
the obtained iterative step by \cite{Grimsmo}.
Therefore, it is not clear whether these algorithms are the same.
To clarify this relation, we apply these algorithms to 
the common model of quantum information bottleneck
presented in \cite[Section V-C]{Grimsmo}.
We make their numerical comparisons under this model.
Then, we find that their behaviors are different.
In many cases, our convergent is better than their convergent.
In addition, when we set the initial point to our convergent,
our algorithm does not change the value of the objective function,
but their algorithm increases the value of the objective function a little.
This fact shows that there exists an initial point, where 
their algorithm does not improve the value of the objective function in each step
while it is guaranteed that our algorithm always iteratively improves the value of the objective function.
This fact shows an advantage of our algorithm over their algorithm.

Indeed, the recent papers \cite{HSF1,HSF2,OHS} proposed
minimization algorithms for a function 
of density matrices under linear constraints
by using the mirror descent method.
Since their analyses are based on several conditions of the objective function,
it is not clear whether their method can be applied to the information bottleneck
because it is not clear whether the objective function of the information bottleneck 
satisfies the condition in their methods.

\if0
The remainder of this paper is organized as follows.
Section \ref{S2-1} introduces our generalization of the general algorithm by 
\cite{RISB}.
Section \ref{S5} applies it to quantum information bottleneck.
Section \ref{S6} numerically compares our algorithm for 
quantum information bottleneck with the existing algorithm by \cite{Grimsmo}.
Section \ref{S7} makes discussions and conclusions.
\fi

\section*{Results}
\subsection*{General formulation for minimization problem
over quantum system}
In this paper, we address
a minimization problem over 
a finite-dimensional Hilbert space ${\cal H}$
with a very general form.
Given a continuous function $\Omega$ from ${\cal M}_a$ to 
the set of Hermitian matrices on ${\cal H}$, $B({\cal H})$, 
we consider the minimization $\min_{\rho} {\cal G}(\rho)$ with
\begin{align}
{\cal G}(\rho):= \Tr \rho \Omega[\rho].
\label{RBE1}
\end{align}
For example, the function $\Omega$ can be given by a combination of 
$\log \rho$, linear functions for $\rho$, and constants.
A concrete example will be given in Section \ref{S5}. 
This minimization problem was studied in the existing paper \cite{RISB}.
In physics, we often consider linear constraints to $\rho$
for the range of the minimization.  
However, their algorithm does not work under general linear constraints.
The aim of this paper is to derive an efficient iterative algorithm for the above minimization
that works even under linear constraints.

Before considering this general problem, we discuss
how this setting covers the optimization of a quantum operation.
Assume that ${\cal H}$ is given as a joint system ${\cal A}\otimes {\cal B}$
and a state $\rho_A$ is fixed on ${\cal A}$.
We focus on a TP-CP map ${\cal E}$ from ${\cal A}$ to ${\cal B}$.
We choose a purification $|\phi_{AA'}\rangle$ on
${\cal A}\otimes {\cal A}'$ of $\rho_A$ such that ${\cal A}'$ is isometric to ${\cal A}$.
We have the state 
\begin{align}
\sigma_{AB}:= 
{\cal E}_{A' \to B}(|\phi_{AA'}\rangle\langle \phi_{AA'}|).\label{VSY}
\end{align}
The state $\sigma_{AB}$ satisfies the condition
\begin{align}
\Tr_B\sigma_{AB}=\rho_A.\label{NNE}
\end{align}
Conversely, when a state $\sigma_{AB}$ satisfies the condition 
\eqref{NNE}, there exists a TP-CP map ${\cal E}$ from ${\cal A}$ to ${\cal B}$ to satisfy the condition \eqref{VSY}.
Therefore, 
given a continuous function $\tilde{\Omega}$ 
from $B({\cal A}\otimes {\cal B})$ to itself,
we have the relation between two minimization problems;
\gl{
\begin{align}
& \min_{\sigma_{AB}: \Tr_B \sigma_{AB}=\rho_A} 
\Tr \sigma_{AB} \tilde{\Omega}[\sigma_{AB}] \nonumber \\
=&
\min_{{\cal E}} \Tr {\cal E}_{A' \to B}(|\phi_{AA'}\rangle\langle \phi_{AA'}|)
\tilde{\Omega}[{\cal E}_{A' \to B}(|\phi_{AA'}\rangle\langle \phi_{AA'}|)]. \label{NKA}
\end{align}
}
That is, the above form of the minimization problem for a quantum operation can be converted to 
the minimization problem on the set 
$\{\sigma_{AB}| \Tr_B \sigma_{AB}=\rho_A\}$
with linear constraints.

Now we consider the case
when the system ${\cal A}$ is a classical system.
Hence, ${\cal A}'$ is also a classical system,
the state $\rho_A$ is given as $\sum_{a\in {\cal A}}
P_A(a) |a\rangle \langle a|$,
and the quantum operation ${\cal E}$ is given as 
a map from the classical system ${\cal A}$ to 
a quantum system ${\cal B}$.
That is, ${\cal E}$ maps $a \in {\cal A}$ to 
$\sigma_a \in {\cal S}({\cal B})$.
Hence, the problem \eqref{NKA}
is simplified as
\gl{
\begin{align}
\min_{\{\sigma_a\}_{a \in {\cal A}}} 
& \Tr 
(\sum_{a\in {\cal A}} P_A(a) |a\rangle \langle a|
\otimes \sigma_a) \tilde{\Omega}[
\sum_{a\in {\cal A}} P_A(a) \nonumber\\
& \cdot |a\rangle \langle a|
\otimes \sigma_a]. \label{NKB}
\end{align}
}
Although the above minimization is a special case of the minimization with linear constraints,
the preceding paper \cite{HY} essentially addressed 
this special case by a simple extension of 
the algorithm by \cite{RISB}.
Therefore, the aim of this paper is to extend their algorithm \cite{RISB} to 
the general case with general linear constraints.

\subsection*{Generalized quantum Arimoto-Blahut algorithm}\Label{S2-1}
To handle linear constraints, we introduce 
the notation of quantum information geometry \cite{Amari-Nagaoka}.
That is, we employ the concept of
mixture family \cite{Amari-Nagaoka,H23}.
Using $k$ linearly independent
Hermitian matrices $H_1, \ldots, H_k$ on ${\cal H}$
and constants $a=(a_1, \ldots, a_k)$, 
we define the mixture family ${\cal M}_a$ as follows
\begin{align}
{\cal M}_a:= \{\rho \in {\cal S}({\cal H})| \Tr \rho H_i=a_i \hbox{ for } i=1, \ldots, k
\},\Label{MDP}
\end{align}
where ${\cal S}({\cal H})$ is
the set of density matrices on ${\cal H}$.
We assume that $\dim {\cal H}=d$.
Setting $l:=d^2-1 $,
we added additional $l-k$ linearly independent Hermitian matrices 
$H_{k+1}, \ldots H_l$ such that
$H_{1}, \ldots H_l$ are linearly independent.
Then, a density matrix $\rho$ can be parameterized by 
the mixture parameter $\eta=(\eta_1, \ldots, \eta_l)$ as
$ \eta_i= \Tr \rho H_i$.
That is, the above density matrix is denoted by $\rho_\eta$.
Therefore, our problem is formulated as the following two problems;
\begin{align}
\overline{{\cal G}}(a):=\min_{\rho \in {\cal M}_a} {\cal G}(\rho), \quad
\rho_{*,a}:=\argmin_{\rho \in {\cal M}_a} {\cal G}(\rho).
\label{RBE}
\end{align}

\begin{table}[t]
\caption{List of mathematical symbols for general setting}
\label{symbols}
\begin{center}
\begin{tabular}{|l|l|l|}
\hline
Symbol& Description & Eq. number  \\
\hline
$B({\cal H})$ & Set of Hermitian matrices on ${\cal H}$ &\\
\hline
${\cal S}({\cal H})$ & Set of density matrices on ${\cal H}$ & \\
\hline
${\cal G}(\rho)$ & Objective function & \eqref{RBE1} \\
\hline
${\cal M}_a$ & Mixture family & \eqref{MDP} \\
\hline
$\Gamma^{(e)}_{{\cal M}_a}$ & $e$-projection to ${\cal M}_a$
& \eqref{Mix} \\
\hline
\multirow{2}{*}{$\Omega[\rho]$} & Functional of $\rho$ used  & \multirow{2}{*}{\eqref{RBE1}}\\
&for objective function&\\
\hline
$\overline{{\cal G}}(a)$ & Minimum value of ${\cal G}(\rho)$ & \eqref{RBE} \\
\hline
$\rho_{*,a}$ & Minimizer of ${\cal G}(\rho)$ & \eqref{RBE} \\
\hline
${\cal F}_3[\sigma]$ & Functional of $\sigma$ & \eqref{NMU} \\
\hline
$D_\Omega(\rho\|\sigma)$ &
Function of $\rho$ and $\sigma$ related to $\Omega$
&
\eqref{BK1+} \\
\hline
\end{tabular}
\end{center}
\end{table}

To describe our algorithm, 
we introduce the $e$-projection of $\rho$ to ${\cal M}_a$
by 
$\Gamma^{(e)}_{{\cal M}_a}[\rho]$.
That is, $\Gamma^{(e)}_{{\cal M}_a}[\rho]$ is defined as follows \cite{Amari-Nagaoka}.
\begin{align}
\Gamma^{(e)}_{{\cal M}_a}[\rho]:=
\argmin_{ \sigma \in {\cal M}_a}
D(\sigma\|\rho).\Label{Mix}
\end{align}
Using the projection, we have the following equation
for an element of $\sigma \in {\cal M}_a$, which is often called
Pythagorean theorem.
\begin{align}
D(\sigma\|\rho)=
D(\sigma\|\Gamma^{(e)}_{{\cal M}_a}[\rho])
+ D(\Gamma^{(e)}_{{\cal M}_a}[\rho]\|\rho).\Label{NNP}
\end{align}

To calculate the $e$-projection
$\Gamma^{(e)}_{{\cal M}_a}[\rho]$,
we apply Lemma 5 of \cite{H23} to Section III-C of \cite{H23}.
By using the solution $\tau_*=(\tau_*^1, \ldots, \tau_*^k)$ of 
the following equation,
$\Gamma^{(e)}_{{\cal M}_a}[\rho]$ is given as
$C \exp (\log \rho + \sum_{j=1}^k H_j \tau_*^j )$
with a normalizing constant $C$;
\begin{align}
\frac{\partial}{\partial \tau^i}
\log \Tr \exp
(\log \rho + \sum_{j=1}^k H_j \tau^j )
=a_i.
\end{align}
Instead of solving the above equation, we can employ
the following method.
That is, $\tau_*$ is given as
\begin{align}
\tau_*:=
\argmin_{\tau }
\log \Big(\Tr \Big(\exp
(\log \rho + \sum_{j=1}^k H_j \tau^j )\Big)\Big)
-\sum_{i=1}^k \tau^i a_i.
\end{align}
Notice that the above objective function 
is convex for $\tau$ \cite[Section III-C]{H23}.

For this aim,
we define the density matrix ${\cal F}_3[\sigma] $ as
\begin{align}
{\cal F}_3[\sigma]:= \frac{1}{\kappa[\sigma]}
\exp( \log \sigma -\frac{1}{\gamma} \Omega[\sigma]),\Label{NMU}
\end{align}
where $\kappa[\sigma]$ is the normalization factor
$\Tr \exp( \log \sigma -\frac{1}{\gamma} \Omega[\sigma])$.
Then, we propose Algorithm \ref{AL1} as Fig. \ref{algo}.
When the calculation of $ \Omega[\rho]$ and 
the projection is feasible, Algorithm \ref{AL1} is feasible.

\begin{algorithm}
\caption{Minimization of ${\cal G}(\rho)$}
\Label{AL1}
\begin{algorithmic}
\STATE {Choose the initial value $\rho^{(1)} \in \mathcal{M}$;} 
\REPEAT 
\STATE Calculate $\rho^{(t+1)}:=\Gamma^{(e)}_{{\cal M}_a}[{\cal F}_3[\rho^{(t)}]]
$;
\UNTIL{convergence.} 
\end{algorithmic}
\end{algorithm}

\begin{figure}[htpb]
    \centering
    \includegraphics[width = \columnwidth]{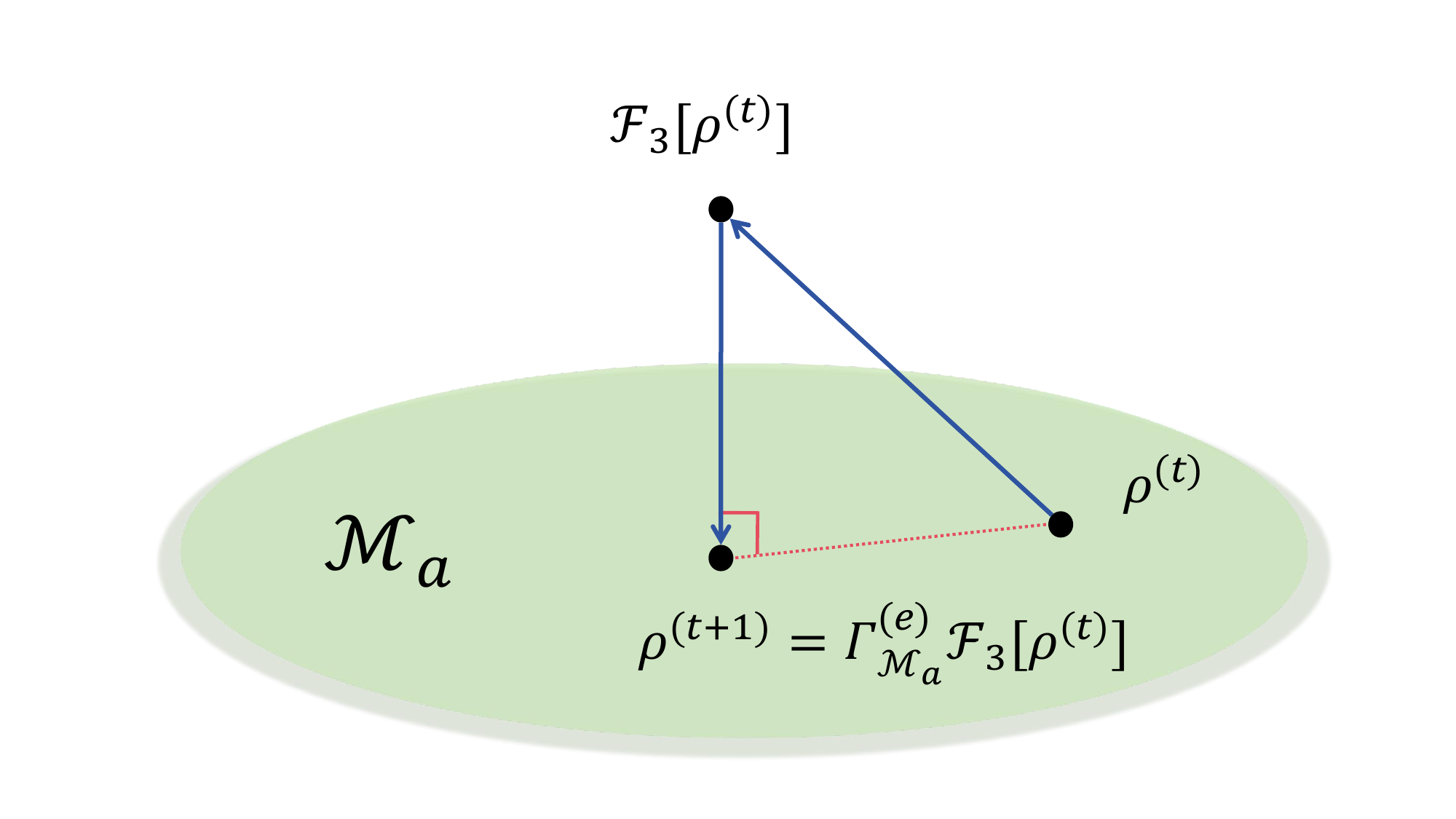}
    \caption{Our generalized algorithm.}
    \label{algo}
\end{figure}

Then, as shown in Method, we have the following two theorems.

\begin{theorem}\Label{NLT}
When all pairs $(\rho^{(t)},\rho^{(t+1)})$ satisfy 
the following condition with $(\rho,\sigma)=(\rho^{(t)},\rho^{(t+1)})$
\begin{align}
D_\Omega(\rho\|\sigma):=\Tr \rho (\Omega[\rho]- \Omega[\sigma])
&\le \gamma D(\rho\|\sigma)
\Label{BK1+} ,
\end{align}
i.e., the positive number $\gamma$ is sufficiently large,
Algorithm \ref{AL1} 
always iteratively improves the value of the objective function. 
\end{theorem}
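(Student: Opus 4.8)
The plan is to recognize Algorithm~\ref{AL1} as a majorization–minimization scheme: I would exhibit an explicit surrogate that lies above $\mathcal{G}$ and touches it at the current iterate, and then show that one step of the algorithm is exactly the constrained minimizer of this surrogate. For a fixed reference state $\rho$, introduce the auxiliary functional
\begin{align}
\Phi_\rho(\sigma):=\Tr \sigma\,\Omega[\rho]+\gamma D(\sigma\|\rho).
\end{align}
Two elementary observations drive the argument. First, $\Phi_\rho(\rho)=\Tr\rho\,\Omega[\rho]=\mathcal{G}(\rho)$ by \eqref{RBE1}, since $D(\rho\|\rho)=0$, so the surrogate is tight at the reference point. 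Second, writing $\mathcal{G}(\sigma)=\Tr\sigma\,\Omega[\rho]+\Tr\sigma(\Omega[\sigma]-\Omega[\rho])=\Tr\sigma\,\Omega[\rho]+D_\Omega(\sigma\|\rho)$, one obtains
\begin{align}
\mathcal{G}(\sigma)=\Phi_\rho(\sigma)+\left(D_\Omega(\sigma\|\rho)-\gamma D(\sigma\|\rho)\right).
\end{align}
Thus the hypothesis \eqref{BK1+}, read along the realized trajectory as the statement that $\gamma$ is large enough that $D_\Omega(\sigma\|\rho)\le\gamma D(\sigma\|\rho)$ for the relevant pair, is exactly the majorization bound $\mathcal{G}(\sigma)\le\Phi_\rho(\sigma)$.

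The second ingredient is to identify the update rule with $\argmin_{\sigma\in{\cal M}_a}\Phi_{\rho}(\sigma)$. Using $\log{\cal F}_3[\rho]=\log\rho-\tfrac1\gamma\Omega[\rho]-\log\kappa[\rho]$ from \eqref{NMU}, a direct expansion of the relative entropy gives, for every $\sigma$,
\begin{align}
\gamma D(\sigma\|{\cal F}_3[\rho])=\Phi_\rho(\sigma)+\gamma\log\kappa[\rho],
\end{align}
where the last term is independent of $\sigma$. Hence minimizing $\Phi_\rho$ over ${\cal M}_a$ coincides with minimizing $D(\sigma\|{\cal F}_3[\rho])$ over ${\cal M}_a$, which is by definition \eqref{Mix} the $e$-projection $\Gamma^{(e)}_{{\cal M}_a}[{\cal F}_3[\rho]]$. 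Therefore $\rho^{(t+1)}=\Gamma^{(e)}_{{\cal M}_a}[{\cal F}_3[\rho^{(t)}]]=\argmin_{\sigma\in{\cal M}_a}\Phi_{\rho^{(t)}}(\sigma)$, which is precisely the step in Algorithm~\ref{AL1}.

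The descent then follows by chaining: since $\rho^{(t)}\in{\cal M}_a$ is feasible and $\rho^{(t+1)}$ minimizes $\Phi_{\rho^{(t)}}$ over ${\cal M}_a$,
\begin{align}
\mathcal{G}(\rho^{(t+1)})\le\Phi_{\rho^{(t)}}(\rho^{(t+1)})\le\Phi_{\rho^{(t)}}(\rho^{(t)})=\mathcal{G}(\rho^{(t)}),
\end{align}
where the first inequality is the majorization bound and the second is optimality of $\rho^{(t+1)}$. I would further sharpen the middle step with the Pythagorean theorem \eqref{NNP} applied to ${\cal F}_3[\rho^{(t)}]$ and its projection: this turns the gap into $\Phi_{\rho^{(t)}}(\rho^{(t)})-\Phi_{\rho^{(t)}}(\rho^{(t+1)})=\gamma D(\rho^{(t)}\|\rho^{(t+1)})$, so the objective strictly decreases unless $\rho^{(t+1)}=\rho^{(t)}$.

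The routine parts are the two algebraic identities; the substantive step I expect to be the main obstacle is the identification in the second paragraph. It rests on the log-linear form of ${\cal F}_3$, which makes $D(\sigma\|{\cal F}_3[\rho])$ and $\Phi_\rho(\sigma)$ differ by a $\sigma$-independent constant, together with the convexity of the $e$-projection objective (Lemma~5 and Section~III-C of \cite{H23}), which is what guarantees that the stationary condition there determines the genuine global minimizer over ${\cal M}_a$ rather than a mere critical point. The remaining delicate point is the hypothesis itself: the majorization $\mathcal{G}\le\Phi_{\rho^{(t)}}$ is a statement about how large $\gamma$ must be relative to the curvature of $\Omega$, and the content of \eqref{BK1+} is exactly that $\gamma$ is taken large enough along the pairs $(\rho^{(t)},\rho^{(t+1)})$; verifying that such a finite $\gamma$ exists for a concrete $\Omega$ built from $\log\rho$, linear terms, and constants is what must be checked in the applications of Section~\ref{S5}.
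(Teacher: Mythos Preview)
Your argument is correct and is essentially the paper's own proof in different clothing: your surrogate $\Phi_\rho(\sigma)$ is exactly the paper's two-variable functional $J_\gamma(\sigma,\rho)=\gamma D(\sigma\|\rho)+\Tr\sigma\,\Omega[\rho]$, your identification of the update with $\argmin_{\sigma\in\mathcal{M}_a}\Phi_\rho(\sigma)$ is the paper's Lemma~\ref{L1}, your majorization bound $\mathcal{G}(\sigma)\le\Phi_\rho(\sigma)$ is Lemma~\ref{L2}, and your descent chain is the paper's relation~\eqref{SAC}. One small remark: both you and the paper's proof actually invoke \eqref{BK1+} with the pair in the order $(\rho,\sigma)=(\rho^{(t+1)},\rho^{(t)})$, which is the reverse of how the theorem statement labels the pair; this is a harmless labeling slip in the statement, and your reading of the hypothesis is the one that is needed.
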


\begin{theorem}\Label{TH1}
When any two densities $\rho$ and $\sigma$ in ${\cal M}_a$ satisfy the condition \eqref{BK1+}, and
the state $\rho_{*,a} $ satisfies 
\begin{align}
D_\Omega(  \rho_{*,a}\|\sigma) \ge 0 
\Label{BK2+} 
\end{align}
with any state $\sigma $,
Algorithm \ref{AL1} satisfies the condition
\begin{align}
{\cal G}(\rho^{(t_0+1)})
-{\cal G}(\rho_{*,a})
\le 
\frac{\gamma D(\rho_{*,a}\| \rho^{(1)}) }{t_0} \Label{XME}
\end{align}
with any initial state $\rho^{(1)}$.
\end{theorem}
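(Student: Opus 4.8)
The plan is to derive an $O(1/t_0)$ bound by first proving a per-step descent estimate of the form $\mathcal{G}(\rho^{(t+1)})-\mathcal{G}(\rho_{*,a})\le \gamma\big(D(\rho_{*,a}\|\rho^{(t)})-D(\rho_{*,a}\|\rho^{(t+1)})\big)$, then telescoping this over $t=1,\ldots,t_0$ and invoking the monotonicity guaranteed by Theorem~\ref{NLT}.

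First I would establish an exact one-step identity relating consecutive iterates to the minimizer $\rho_{*,a}$. Reading off $\log \mathcal{F}_3[\rho^{(t)}]=\log\rho^{(t)}-\frac{1}{\gamma}\Omega[\rho^{(t)}]-\log\kappa[\rho^{(t)}]$ from \eqref{NMU}, I expand the relative entropy $D(\xi\|\mathcal{F}_3[\rho^{(t)}])=D(\xi\|\rho^{(t)})+\frac{1}{\gamma}\Tr\xi\,\Omega[\rho^{(t)}]+\log\kappa[\rho^{(t)}]$ for an arbitrary state $\xi$. Applying this with $\xi=\rho_{*,a}$ and with $\xi=\rho^{(t+1)}$, then using the Pythagorean theorem \eqref{NNP} with $\sigma=\rho_{*,a}\in\mathcal{M}_a$ and $\rho=\mathcal{F}_3[\rho^{(t)}]$ (so that $\rho^{(t+1)}=\Gamma^{(e)}_{\mathcal{M}_a}[\mathcal{F}_3[\rho^{(t)}]]$), the normalization terms $\log\kappa[\rho^{(t)}]$ cancel and I obtain
\begin{align}
&\gamma\big(D(\rho_{*,a}\|\rho^{(t)})-D(\rho_{*,a}\|\rho^{(t+1)})\big)\nonumber\\
&=\Tr(\rho^{(t+1)}-\rho_{*,a})\,\Omega[\rho^{(t)}]+\gamma D(\rho^{(t+1)}\|\rho^{(t)}).\nonumber
\end{align}

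Next I would convert this identity into the descent estimate. Using \eqref{RBE1}, the inequality $\mathcal{G}(\rho^{(t+1)})-\mathcal{G}(\rho_{*,a})\le \Tr(\rho^{(t+1)}-\rho_{*,a})\Omega[\rho^{(t)}]+\gamma D(\rho^{(t+1)}\|\rho^{(t)})$ is, after grouping terms into the $D_\Omega$ quantities of \eqref{BK1+}, equivalent to
\begin{align}
D_\Omega(\rho^{(t+1)}\|\rho^{(t)})-\gamma D(\rho^{(t+1)}\|\rho^{(t)})\le D_\Omega(\rho_{*,a}\|\rho^{(t)}).\nonumber
\end{align}
The left-hand side is nonpositive by condition \eqref{BK1+} applied to the pair $(\rho^{(t+1)},\rho^{(t)})$, both of which lie in $\mathcal{M}_a$, while the right-hand side is nonnegative by condition \eqref{BK2+} with $\sigma=\rho^{(t)}$; combining this with the identity above yields the targeted per-step descent estimate.

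Finally I would sum the descent estimate over $t=1,\ldots,t_0$. The right-hand side telescopes to $\gamma\big(D(\rho_{*,a}\|\rho^{(1)})-D(\rho_{*,a}\|\rho^{(t_0+1)})\big)\le \gamma D(\rho_{*,a}\|\rho^{(1)})$ by nonnegativity of relative entropy. Since \eqref{BK1+} holds for every pair in $\mathcal{M}_a$, Theorem~\ref{NLT} ensures $\mathcal{G}(\rho^{(t)})$ is nonincreasing, so $\mathcal{G}(\rho^{(t_0+1)})\le\mathcal{G}(\rho^{(t+1)})$ for all $t\le t_0$; bounding the summed left-hand side below by $t_0\big(\mathcal{G}(\rho^{(t_0+1)})-\mathcal{G}(\rho_{*,a})\big)$ then gives \eqref{XME}. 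The main obstacle I anticipate is the first step: one must track the scalar normalization $\log\kappa[\rho^{(t)}]$ carefully and arrange the algebra so that the full nonlinearity of $\Omega$ is isolated into the two $D_\Omega$ terms, which is precisely what allows hypotheses \eqref{BK1+} and \eqref{BK2+} to be applied with the correct arguments.
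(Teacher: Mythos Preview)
Your proposal is correct and follows essentially the same route as the paper: the paper likewise establishes the one-step identity (its Lemma~\ref{LLX} and \eqref{XME2}, which after simplification coincide with your Pythagorean computation), then applies \eqref{BK1+} to the pair $(\rho^{(t+1)},\rho^{(t)})$ and \eqref{BK2+} with $\sigma=\rho^{(t)}$ to obtain the descent estimate \eqref{XPZ2}, and finally telescopes and uses the monotonicity \eqref{AMK} exactly as you do. Your derivation of the identity is slightly more direct than the paper's chain through $J_\gamma$ and Lemma~\ref{L1}, but the substance and the use of the two hypotheses are identical.
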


The mathematical symbols for our general setting 
is summarized in Table \ref{symbols}


\if0
In fact, the condition (A2) is closely related to the convexity of ${\cal G}(\rho)$ as follows.
\begin{lemma}
When 
\begin{align}
\Tr \rho (\Omega[\rho]- \Omega[\sigma]) \ge 0
\Label{XMZ2}
\end{align}
holds for $P,Q \in {\cal M}_a$, 
the map $ P \mapsto {\cal G}(\rho)$ is convex.
\end{lemma}
\begin{proof}
\begin{align}
&\lambda {\cal G}(\rho)+(1-\lambda) {\cal G}(\sigma)
-\lambda {\cal G}(\lambda P+(1-\lambda)\sigma)\\
=&\lambda \Tr \rho (\Omega[\rho]- \Omega[\lambda P+(1-\lambda)Q])\\
&+(1-\lambda) \Tr Q (\Omega[\sigma]- \Omega[\lambda P+(1-\lambda)Q])\\
 \ge & 0.
\end{align}
\end{proof}
\fi

\subsection*{Application to information bottleneck}\Label{S5}
As a method for information-theoretical machine learning,
we often consider quantum information bottleneck 
\cite{Grimsmo}, which is a quantum generalization of 
\cite{Tishby}.
Our approach is the application of Algorithm \ref{AL1}.
Consider a TP-CP map ${\cal R}$ from 
${\cal X}$ to ${\cal Y}$ and a state $\rho_X$ on ${\cal X}$. 
The task is to extract a piece of essential information
from the system ${\cal X}$ to the system ${\cal T}$ with respect to ${\cal Y}$.

The information extraction from the system ${\cal X}$ to the system ${\cal T}$ is given by 
a TP-CP map ${\cal E}$ from
the system ${\cal X}$ to the system ${\cal T}$.
To consider the correlation between 
the systems ${\cal X}$ and ${\cal T}$,
we consider the purification $|\phi_{XX'}\rangle$ of $\rho_X$.
Then, we have the state 
${\cal E}_{X'\to T}(|\phi_{XX'}\rangle\langle\phi_{XX'}|)$
on the joint system ${\cal X}\otimes {\cal T}$.
This joint state is an element of 
${\cal M}(\rho_X):=\{\sigma_{TX}| \Tr_T \sigma_{TX}=\rho_X\}$.
Considering the state ${\cal E}_{X'\to T}(|\phi_{XX'}\rangle\langle\phi_{XX'}|)$, we find a one-to-one relation 
between 
a TP-CP map from
the system ${\cal X}'$ to the system ${\cal T}$
and an element of ${\cal M}(\rho_X)$.
In the following, we discuss 
the information extraction from the system ${\cal X}$ to the system ${\cal T}$ by an element of 
${\cal M}(\rho_X)$.
Let $d$ be the dimension of ${\cal X}$.
We choose $d^2-1 $ linearly independent Hermitian matrices
$H_1, \ldots, H_{d^2-1}$ that are linearly independent of $I$.
The set ${\cal M}(\rho_X)$ is characterized as
\gl{\begin{align}
{\cal M}(\rho_X)=\big\{\sigma_{TX}\big|
& \Tr_{TX} \sigma_{TX} H_j= \Tr_{X}\rho_X H_j \nonumber\\
& \hbox{ for }
j=1, \ldots,d^2-1 \big\}.
\end{align}}
Hence, we find that 
${\cal M}(\rho_X)$ is a mixture family generated by
$H_1, \ldots, H_{d^2-1}$.

For this information extraction, given parameters $\alpha \in [0,1]$ and $\beta \ge \alpha$,
we choose a joint state 
$\sigma_{TX}^*\in {\cal M}(\rho_X)$
as
\gl{
\begin{align}
\sigma_{TX}^*:= & \argmin_{\sigma_{TX} \in {\cal M}(\rho_X)} 
\alpha I(T;X)[\sigma_{TX}]
\nonumber\\
 & +(1-\alpha)H(T)[\sigma_{TX}]
 -\beta I(T;Y)[{\cal R}(\sigma_{TX})].\label{BNU}
\end{align}}
This method is called information bottleneck.
The above objective function of the classical case is 
generalized from the original 
information bottleneck.
Original information bottleneck \cite{Tishby} has only the parameter $\beta$
and sets $\alpha$ to be $1$ in the classical setting.
The paper \cite{Grimsmo} follows this formulation.
The paper \cite{Strouse} introduced the additional parameter $\alpha$, and 
the paper \cite{HY} follows this extended formulation.

The reference \cite{Grimsmo} proposed an iterative algorithm to solve the above minimum for the case with $\alpha=1$,
it did not discuss the convergence.
The reference \cite{HY}
proved its convergence only 
when ${\cal X}$ is a classical system.
Hence, it is not known what algorithm converges 
when ${\cal X}$ is a quantum system.

To discuss this problem, we apply 
our method to this problem. We define
\begin{align}
{\Omega}_{\alpha,\beta}[\sigma_{TX}]
:=
&\alpha \log \sigma_{TX}
-\alpha \log \rho_{X}
+(\beta-1) \log \sigma_{T} \nonumber \\
&+\beta 
 {\cal R}^* ( \log {\cal R}(\rho_{X}) 
- \log {\cal R}( \sigma_{TX}) ).\Label{NBT}
\end{align}
Then, as shown in Method,
the objective function is written as
\begin{align}
{\cal G}_{\alpha,\beta}( \sigma_{TX}):=
\Tr\sigma_{TX} {\Omega}_{\alpha,\beta}[\sigma_{TX}].\label{NMD}
\end{align}
That is, our problem is reduced to the minimization
\begin{align}
\min_{\sigma_{TX}\in {\cal M}(\rho_X)}
{\cal G}_{\alpha,\beta}(\sigma_{TX}).
\end{align}
Then, the mathematical symbols for
information bottleneck 
is summarized in Table \ref{symbols2}.

Then, we apply Algorithm \ref{AL1} to this problem.
${\cal F}_3[\sigma_{TX}]$ is calculated as
\gl{\begin{align}
&{\cal F}_3[\sigma_{TX}]
={\kappa_1}
\exp( \log \sigma_{TX} -\frac{1}{\gamma} \Omega_{\alpha,\beta}[\sigma_{TX}]) 
\nonumber\\
=&{\kappa_1}
\exp \Big( 
(1-\frac{\alpha}{\gamma}) \log \sigma_{TX}
+\frac{\alpha}{\gamma}  \log \rho_{X}
\nonumber\\
&-\frac{\beta-1 }{\gamma} \log \sigma_{T} 
-\frac{\beta}{\gamma}  
 {\cal R}^* ( \log {\cal R}(\rho_{X}) 
- \log {\cal R}( \sigma_{TX}) \Big),
\end{align}}
where $\kappa_1$ is the normalizing constant.
Hence, by using
the $e$-projection $\Gamma^{(e)}_{{\cal M}(\rho_X)}$
to ${\cal M}(\rho_X)$,
the update rule is given as
$\sigma_{TX}\in {\cal M}(\rho_X)
\mapsto 
\Gamma^{(e)}_{{\cal M}(\rho_X)}
[{\cal F}_3[\sigma_{TX}]]$.

\begin{table}[t]
\caption{List of mathematical symbols for information bottleneck}
\label{symbols2}
\begin{center}
\begin{tabular}{|l|l|l|}
\hline
Symbol& Description & Eq. number  \\
\hline
\multirow{2}{*}{${\cal M}(\rho_X)$} & Mixture family of states  & 
\\
&whose reduced density is $\rho_X$&\\
\hline
${\cal G}_{\alpha,\beta}(\rho)$ & Objective function & \eqref{NMD} \\
\hline
\multirow{2}{*}{${\Omega}_{\alpha,\beta}[\sigma_{TX}]$}
&Functional of $\sigma_{TX}$ used&\multirow{2}{*}{\eqref{NBT}}\\
&for objective function&\\
\hline
\end{tabular}
\end{center}
\end{table}

To get this, we need to find $\tau_{*}=(\tau_{*}^1, \ldots, \tau_{*}^{d^2-1})$
satisfying the following condition.
\begin{align}
& \frac{\partial}{\partial \tau^i}
\log \Tr \exp
\Big(\log \sigma_{TX} -\frac{1}{\gamma} \Omega_{\alpha,\beta}[\sigma_{TX}] 
+ \sum_{j=1}^{d^2-1} H_j \tau^j \Big)
\nonumber\\
&=\Tr \rho_X H_i
\end{align}
for $i=1, \ldots, d^2-1$.
Then, 
$\Gamma^{(e)}_{{\cal M}(\rho_X)}
[{\cal F}_3[\sigma_{TX}]]$ is given as
\gl{\begin{align}
&\Gamma^{(e)}_{{\cal M}(\rho_X)}
[{\cal F}_3[\sigma_{TX}]]
\nonumber\\
=& \kappa_2
\exp
(\log \sigma_{TX} -\frac{1}{\gamma} \Omega_{\alpha,\beta}[\sigma_{TX}] 
+ \sum_{j=1}^k H_j \tau_*^j ),
\end{align}}
where $\kappa_2$ is the normalizing constant.

Also, as shown in Method,
we have the relation 
\begin{align}
D_{\Omega_{\alpha,\beta}}( \sigma_{TX} \|\sigma_{TX}')
\le \alpha D( \sigma_{TX} \|\sigma_{TX}')\Label{MMS}
\end{align}
for $\sigma_{TX},\sigma_{TX}' \in {\cal M}(\rho_X)$.
That is,
the condition \eqref{BK1+} holds with $\gamma=\alpha$.
Due to Theorem \ref{NLT},
our algorithm with $\gamma=\alpha$
converges.

From the construction,
when the system $X$ is a classical system.
Our algorithm is the same as the algorithm by \cite{HY}.
Also, it was shown in \cite{HY} that 
the algorithm by \cite{HY} with $\gamma=\alpha$ is the same as 
the algorithm given in \cite{Grimsmo} with a classical system $X$.
However, the above analysis does not clarify whether the above algorithm is the same as 
the algorithm given in 
\cite{Grimsmo} 
when the system $X$ is not a classical system.

\subsection*{Numerical analysis}
\gl{To clarify whether our proposed algorithm is the same as the algorithm given in \cite{Grimsmo} with a general quantum system $X$, a comparative analysis is conducted between the two algorithms. To this end, we focus on the concrete model introduced in \cite[Section V-C]{Grimsmo}, which is based on an amplitude-damping channel. We choose the system $X$ as the combination of two qubits $X_1$ and $X_2$.}
The state $\rho_{X_1X_{2}}$ and
the TP-CP map ${\cal R}$ are given as 
\begin{align}
\rho_{X_1X_{2}}&=[(1-p)\ketbra{+}{+}+p\frac{I_2}{2}]\otimes \frac{I_2}{2} \\
\mathcal{R}(\sigma_{R_1R_{2}}) &= \Tr_{R_2}[K_1\sigma_{R_1R_2}K_1^{\dagger} + K_2\sigma_{R_1R_2}K_2^{\dagger}],
\end{align}
where
\begin{align}
K_1 = \begin{pmatrix}
    1 & 0\\ 0 & \sqrt{1-\lambda}
\end{pmatrix} \otimes I_2 ,\quad
K_2 = \begin{pmatrix}
    0 & \sqrt{\lambda}\\ 0 & 0
\end{pmatrix} \otimes I_2.
\end{align}
In the above channel, an amplitude-damping
channel acts on the first qubit, and the second qubit is traced out.
\gl{The system $T$ is chosen to have only one qubit, which means the dimension of $T$ is $d_T=2$.
We choose the parameters $\lambda = 0.7$ and $p = 0.3$. 
We set $\alpha = 1$ in the objective function in \eqref{BNU}
so that we minimize $I(T;X)[\sigma_{TX}]-\beta I(T;Y)[{\cal R}(\sigma_{TX})]$.
To ensure the convergence of our algorithm, we set $\gamma$ equal to $\alpha=1$ within our algorithmic framework.}

\begin{figure}[htbp]
    \centering
    \subfigure{\includegraphics[width=  \columnwidth]{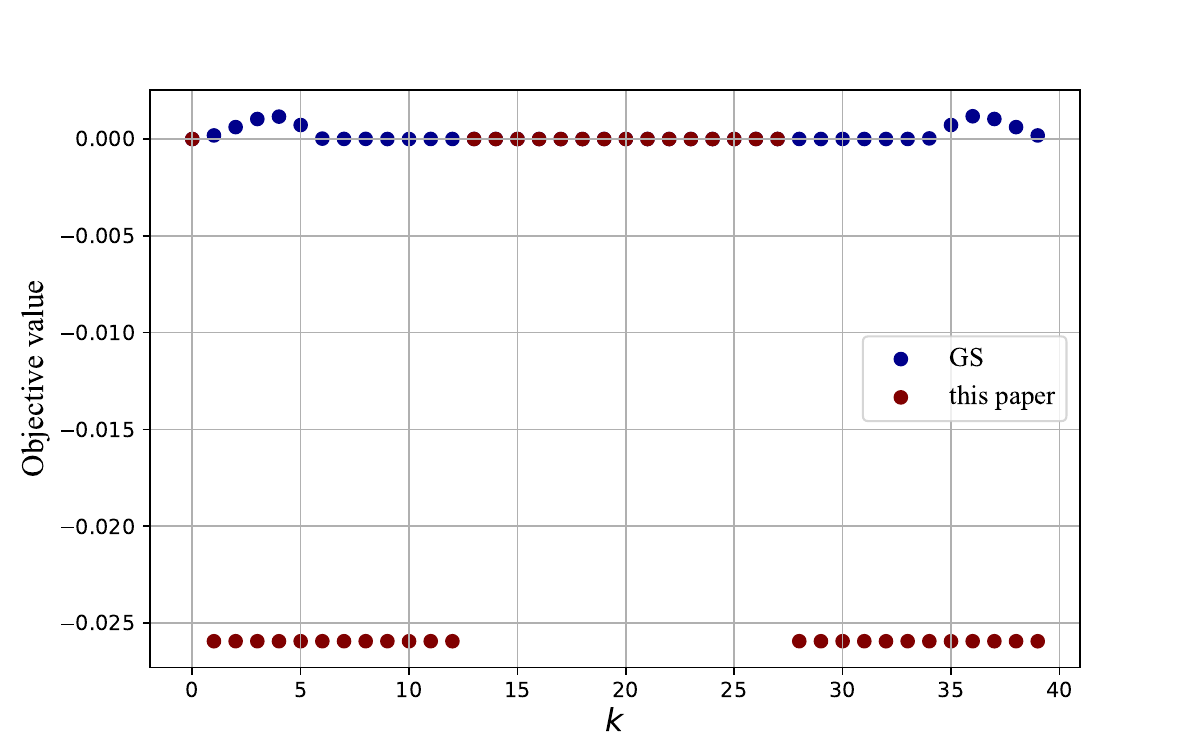}}
    \subfigure{\includegraphics[width=  \columnwidth]{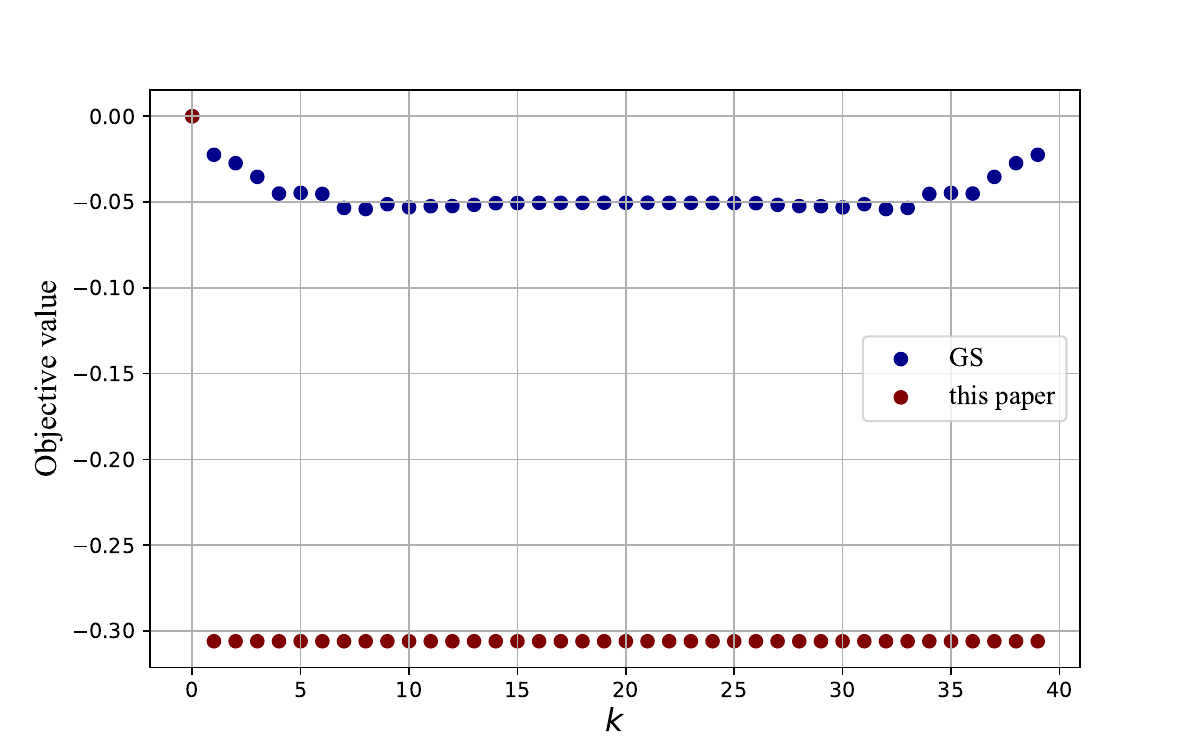}}
    \subfigure{\includegraphics[width = \columnwidth]{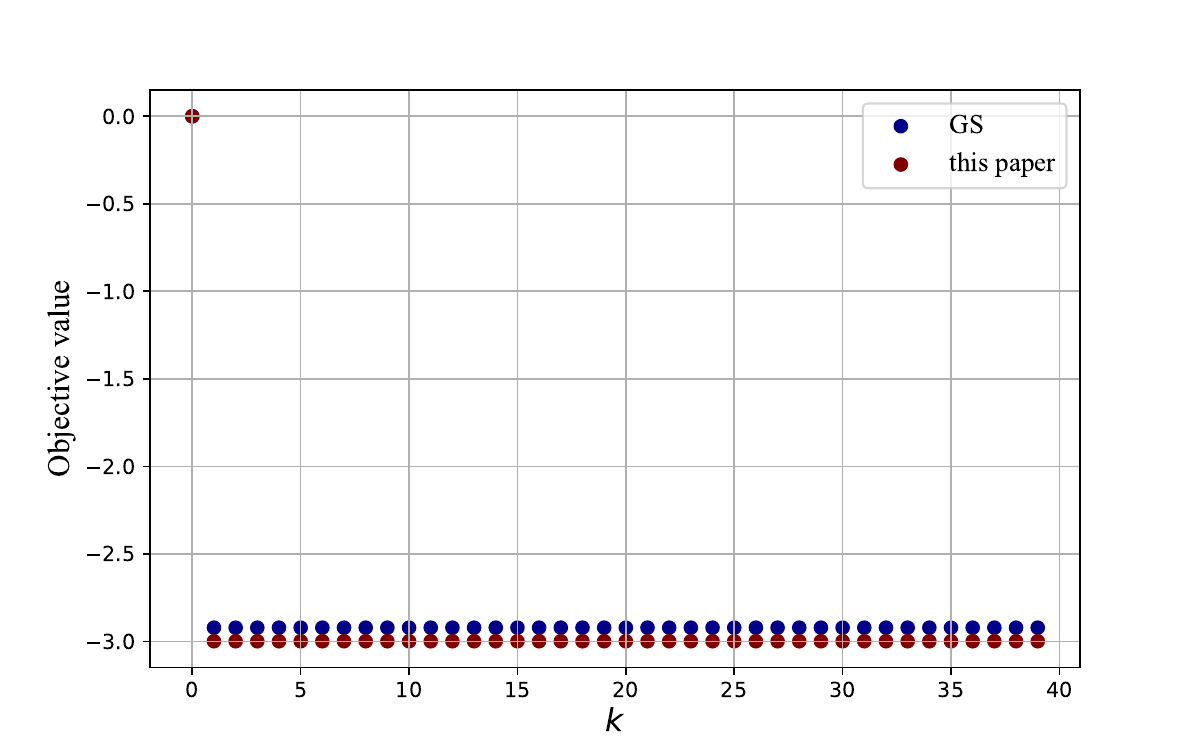}}
    \caption{\gl{The convergent value of $I(T;X)[\sigma_{TX}]-\beta I(T;Y)[{\cal R}(\sigma_{TX})] $ as a function of $k$. 
    The vertical axis expresses
    the convergent value of the objective function.
    The horizontal axis shows the choice of $k$ from $0$ to $39$. The figures from top to bottom are the result of $\beta = 2,10/3, 10$ respectively.
    In the middle figure, these two convergent values take almost the same value from $k=13$ to $k=27$. Also, in all figures, they take almost the same value at $k=0$. 
     }}
    \label{fig:1}
\end{figure}

\gl{As both our algorithm, shown in Algorithm \ref{AL1}, and the algorithm proposed by \cite{Grimsmo}
are iterative in nature, the initial state $ \sigma_{TX}$ should be chosen to subject to the constraint
$\Tr_T \sigma_{TX}=\rho_X$.
Given that both algorithms do not guarantee the convergence to the global minimum,
their convergent points potentially depend on the choice of the initial points.
Therefore, the initial point should be chosen carefully.}

In this paper, we consider the following series of initial states;
\begin{equation}
    \sigma_{XT}^{(k)} = \Tr_{X'_1X'_2}[U_{(k)}U_{X'_1X'_2}(\rho_{XX'_1X'_2}\otimes \rho_T)U^{\dagger}_{X'_1X'_2}U^{\dagger}_{(k)}],
    \label{Eq: different_initial_states}
\end{equation}
where $\rho_{XX'_1X'_2}$ is a purification of $\rho_X$
with two additional qubits $X_1'$ and $X_2'$,
and the state $\rho_T$ is an arbitrary state on system $T$. 
Remember that the system $X$ is 2 qubits and the system $T$ is 1 qubit. 
 The unitary $U_{X'_1X'_2}$ is randomly chosen from the $\mathcal{U}(4)$ and is acting on the system $X'_1X'_2$. 
 The $U_{(k)}$ is acting on the system $X'_2T$ and has the following form
\begin{equation}
    U_{(k)} = \ketbra{0}{0}\otimes I + \ketbra{1}{1}\otimes R_x(\frac{2\pi k}{n}), \label{BVT}
\end{equation}
where the $R_x(\theta)$ denotes the unitary representation of $R_x$ gate with $\theta$ rotation angle and
its mathematical expression is
\begin{equation}
    R_x(\theta) = \begin{bmatrix}
        \cos{\frac{\theta}{2}} & -i \sin{\frac{\theta}{2}}\\
        -i\sin{\frac{\theta}{2}}& \cos{\frac{\theta}{2}}
    \end{bmatrix}.
\end{equation}
\gl{In addition, $n$ is a fixed integer (we choose $n=40$).}
Since $U_{(k)}$ is the identity when $k=0$, 
the state $\rho_{XT}^{(0)}$ has the product form $ \rho_X\otimes\rho_T$. 
Varying the value of $k$, we can change the correlation between $X$ and $T$ 
in the initial point.

In the first experiment, we investigate the impact of the initial point selection on the convergence behavior of both algorithms by varying the parameter $k$ within the range from $0$ to $39$ with $\beta$ across values of $2, 10/3$ and $10$. Here, parameter $k$ serves as the determinant for the selection of the initial point.
Fig.~\ref{fig:1} illustrates that our algorithm exhibits reduced sensitivity to initial points compared to their algorithm.
Also, these figures show that the choice of $k=20$ is the best choice.
In the second experiment, we consistently set $k$ to be $20$. The results presented in Fig.~\ref{fig:2} demonstrate that our algorithm outperforms theirs across the range of $1/\beta$ values from $0.11$ to $0.3$.

\begin{figure}[htpb]
    \centering
    \includegraphics[width = \columnwidth]{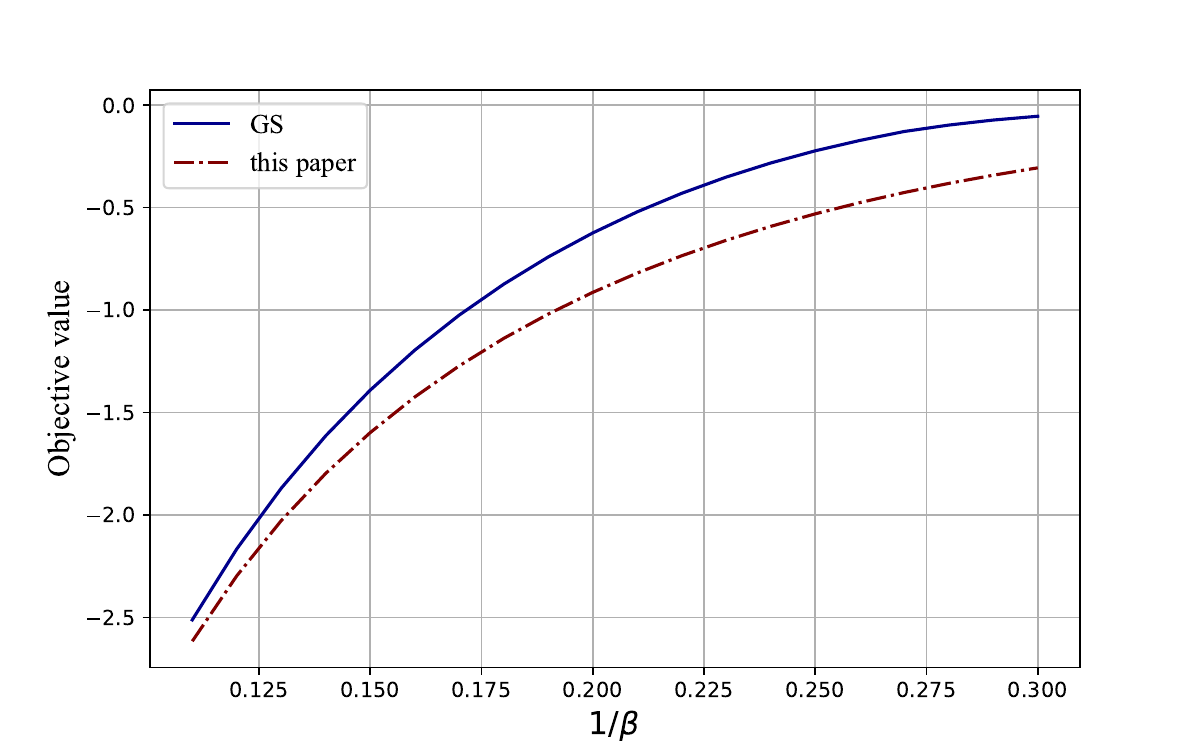}
    \caption{The convergent value of $I(T;X)[\sigma_{TX}]-\beta I(T;Y)[{\cal R}(\sigma_{TX})] $ with $k=20$.
    The vertical axis expresses
    the convergent value of the objective function.
    The horizontal axis shows the choice of $1/\beta$ from $0.11$ to $0.3$.   }
    \label{fig:2}
\end{figure}

\begin{figure}[htbp]
    \centering
    \includegraphics[width = \columnwidth]{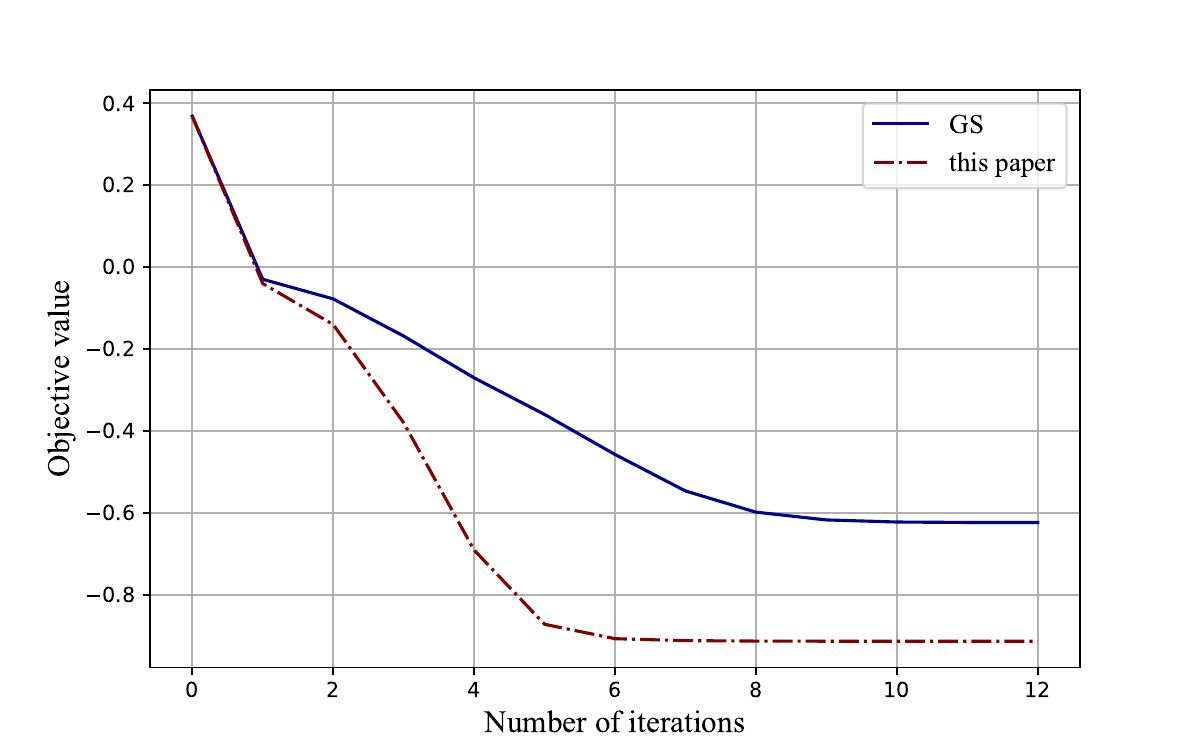}
    \caption{Iterative behavior of two algorithms with $\beta=5$ (1).
    The vertical axis shows the value of the objective function.
    The horizontal axis shows the number of iterations. The initial point is given by 
    \eqref{BVT} with $k=20$.}
    \label{fig:3}
\end{figure}

\begin{figure}[htbp]
    \centering
    \includegraphics[width = \columnwidth]{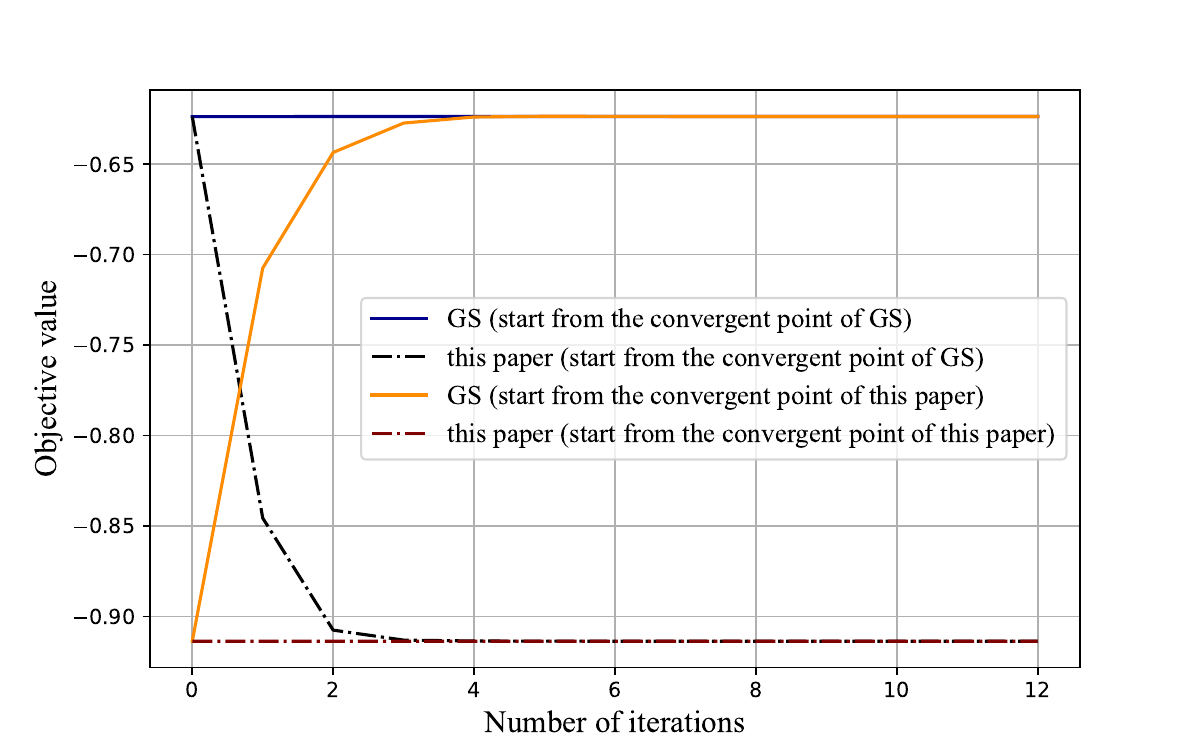}
    \caption{Iterative behavior of two algorithms with $\beta=5$ (2).
    The vertical and horizontal axes show the same as Fig. \ref{fig:3}. 
    Orange solid and brown chain lines express the algorithm \cite{Grimsmo} and our algorithm, respectively
     when the initial point is set to be the convergent point of our algorithm in 
   Fig. \ref{fig:3}.
Blue solid and black chain lines express the algorithm \cite{Grimsmo} and our algorithm, respectively
     when the initial point is set to be the convergent point of the algorithm \cite{Grimsmo} in Fig. \ref{fig:3}.}
    \label{fig:4}
\end{figure}

In the third experiment, we set $\beta=5$. Fig.~\ref{fig:3} illustrates the iterative behaviors of both algorithms when the initial point is fixed at $k=20$.
For the fourth experiment, we maintain the same $\beta$, but vary the initial point by choosing it as the convergent point identified in Fig.~\ref{fig:3}. Fig.~\ref{fig:4} depicts the iterative behaviors of both algorithms using these two distinct initial points.
Remarkably, our algorithm consistently decreases the objective function when the initial point corresponds to the convergent point of their algorithm. Surprisingly, their algorithm exhibits an iterative increase in the objective function when the initial point is the convergent point of our algorithm. It's noteworthy that their algorithm converges to the same point with two different initial points, specifically, \eqref{Eq: different_initial_states} with $k=20$ and the convergent point of our algorithm.
Remember that our algorithm is always monotonically decreasing 
the objective function as guaranteed by Theorem \ref{NLT} with the choice $\gamma=\alpha$. However, their algorithm is designed such that the convergent point is a stationary point, i.e., a point where the derivative is zero. Consequently, there are scenarios where their algorithm iteratively increases the objective function, as illustrated in Fig.~\ref{fig:4}.

\section*{Discussion}\Label{S7}
We have generalized the algorithm by \cite{RISB}
by using the concept of a mixture family, which is a key concept of information 
geometry.
Although their algorithm works only with a function defined on the set of all density matrices,
our algorithm works with a function defined on a subset with linear constraints.
For this generalization, 
each iterative step defines an improved element in the subset.
To choose a suitable element of the subset, we have employed the projection to 
the subset by using the idea of information geometry.
Our algorithm is an iterative algorithm,
and each step improves the value of the objective function.
Also, we have shown the convergence to the global minimum
when the objective function satisfies certain conditions.

Since our algorithm has a wider applicability,
our algorithm can be applied to an optimization problem that cannot be solved 
by the algorithm \cite{RISB}.
For example, we can expect that 
it can be applied to the quantum versions of 
the problems explained in the reference \cite{Hmixture} including
the quantum version of em-algorithm \cite{H23}, which covers 
rate distortion theory.
Further, as another example, we have applied our algorithm to 
quantum information bottleneck.
The previous paper \cite{HY} discussed
an algorithm for quantum information bottleneck.
But, it assumes that the system $X$ is a classical system.
Our algorithm can be applied even when the system $X$ is a quantum system.
Also, the paper \cite{HY} proposed an algorithm for 
quantum information bottleneck, 
which works even with a quantum system $X$.
Our algorithm is a generalization of the algorithm \cite{HY}.
When the system $X$ is a classical system,
the previous paper \cite{HY} showed that 
the algorithm by \cite{HY} is the same as the algorithm by \cite{Grimsmo}.

Therefore, it is an important issue whether our algorithm is the same as the algorithm by 
\cite{Grimsmo}.
To answer this problem, we have compared our algorithm with 
the algorithm \cite{Grimsmo}.
As a result, we have found that 
our algorithm has a different behavior from their algorithm,
and has better performances than their algorithm.
Although we have shown that our algorithm monotonically 
decreases the value of the objective function,
our numerical analysis has revealed that
there exists a case that their algorithm increases the value of the objective function.
This difference shows a big advantage of our algorithm over their algorithm.

Our numerical analysis shows that our algorithm does not have 
the convergence to the global minimum as well as the algorithm by 
\cite{Grimsmo}.
Since the necessarily condition \eqref{BK2+}
for the convergence to the global minimum
is rewritten as
the relation
\begin{align}
&\alpha D( \sigma_{TX}\|\sigma_{TX}')
+(\beta-1)  D(\sigma_{T}\|\sigma_{T}')
\nonumber\\
&- \beta D ({\cal R}( \sigma_{TX}) \|{\cal R}( \sigma_{TX}') )\ge 0\Label{XMQ}
\end{align}
for $\sigma_{TX},\sigma_{TX}'\in {\cal M}(\rho_X)$,
our numerical analysis shows that the relation \eqref{XMQ} does not hold.

Our general algorithm covers various functions defined over a set of density matrices 
with linear constraints.
Since it has wider applicability, it is an interesting future topic to find 
its other applications.
Although we have pointed out the above problem for the algorithm by \cite{Grimsmo},
we could not sufficiently clarify the iterative behavior of their algorithm.
It is another interesting future study to analyze its iterative behavior.

Recently, the paper \cite{HSF1} showed that
the algorithm by \cite{RISB} is equivalent to 
the mirror descent algorithm \cite[Section 4.2]{Bubeck}, \cite[Algorithm 1]{HSF1}
when the function
$\Omega$ is a constant times of the gradient of the objective.
Therefore, it is an interesting problem whether
a similar equivalence holds for Algorithm \ref{AL1} under a similar condition.
In addition, 
although the mirror descent algorithm requires us to solve a minimization 
in each iteration,
our algorithm exactly gives the solution of 
the minimization in each iteration, which is an advantage of our algorithm.

\section*{Methods}
\subsection*{Analysis for our general algorithm and Proof of Theorem \ref{NLT}}
Indeed, Algorithm \ref{AL1} is characterized as the iterative minimization of 
the following two-variable function, i.e., the extended objective function;
\begin{align}
J_\gamma(\rho,\sigma):=\gamma D(\rho\|\sigma)+\Tr \rho \Omega[\sigma].
\end{align}
To see this fact, 
we define
\begin{align}
{\cal F}_1[\rho]  := \argmin_{\sigma \in {\cal M}_a}  J _\gamma(\rho,\sigma) ,\quad
{\cal F}_2[\sigma]  := \argmin_{\rho \in {\cal M}_a}  J _\gamma(\rho,\sigma) .
\end{align}
Then, as a generalization of a part of \cite[Lemma 3.2]{RISB}, ${\cal F}_2[\sigma]$ is calculated as follows.
\begin{lemma}\Label{L1}
Then, we have ${\cal F}_2[\sigma] =\Gamma^{(e)}_{{\cal M}_a}[{\cal F}_3[\sigma]] $, i.e., 
\begin{align}
\min_{\rho \in {\cal M}_a}  J_\gamma(\rho,\sigma)&=
J _\gamma(\Gamma^{(e)}_{{\cal M}_a}[{\cal F}_3[\sigma]],\sigma) \nonumber \\
&=
\gamma D(\Gamma^{(e)}_{{\cal M}_a}[{\cal F}_3[\sigma]]\|{\cal F}_3[\sigma])
- \gamma \log \kappa[\sigma]  ,\Label{XMY} \\
 J _\gamma(\rho,\sigma)
 &=\min_{\rho' \in {\cal M}_a}  J _\gamma(\rho',\sigma)
+\gamma D(\rho\| \Gamma^{(e)}_{{\cal M}_a}[{\cal F}_3[\sigma]]) \Label{XMY2UU}  \\
&=J _\gamma(\Gamma^{(e)}_{{\cal M}_a}[{\cal F}_3[\sigma]],\sigma)
+\gamma D(\rho\| \Gamma^{(e)}_{{\cal M}_a}[{\cal F}_3[\sigma]]).
\Label{XMY2} 
\end{align}
\end{lemma}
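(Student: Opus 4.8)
The plan is to collapse the two-variable function $J_\gamma(\rho,\sigma)$, regarded as a function of $\rho$ for fixed $\sigma$, into a single relative entropy against ${\cal F}_3[\sigma]$ plus a $\rho$-independent constant. First I would expand the quantum relative entropy $D(\rho\|\sigma)=\Tr\rho(\log\rho-\log\sigma)$ and collect terms, writing
\begin{align}
J_\gamma(\rho,\sigma)
=\gamma\Tr\rho\log\rho
-\gamma\Tr\rho\Big(\log\sigma-\tfrac{1}{\gamma}\Omega[\sigma]\Big).
\end{align}
The definition \eqref{NMU} of ${\cal F}_3[\sigma]$ gives the operator identity $\log{\cal F}_3[\sigma]=\log\sigma-\tfrac{1}{\gamma}\Omega[\sigma]-\log\kappa[\sigma]$, valid because ${\cal F}_3[\sigma]$ is a normalized exponential of a Hermitian matrix and is therefore positive definite, so its logarithm is well defined. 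Substituting this and using $\Tr\rho=1$ yields the central identity
\begin{align}
J_\gamma(\rho,\sigma)=\gamma D(\rho\|{\cal F}_3[\sigma])-\gamma\log\kappa[\sigma].
\end{align}

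Since $\kappa[\sigma]$ does not depend on $\rho$ and $\gamma>0$, minimizing $J_\gamma(\cdot,\sigma)$ over $\rho\in{\cal M}_a$ is equivalent to minimizing $D(\rho\|{\cal F}_3[\sigma])$ over ${\cal M}_a$. By the definition \eqref{Mix} of the $e$-projection, this minimizer is exactly $\Gamma^{(e)}_{{\cal M}_a}[{\cal F}_3[\sigma]]$, which establishes ${\cal F}_2[\sigma]=\Gamma^{(e)}_{{\cal M}_a}[{\cal F}_3[\sigma]]$. Evaluating the central identity at $\rho=\Gamma^{(e)}_{{\cal M}_a}[{\cal F}_3[\sigma]]$ then reads off the minimum value in \eqref{XMY}.

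For the remaining two equations I would invoke the Pythagorean theorem \eqref{NNP}, applied with the projected point taken to be ${\cal F}_3[\sigma]$ and the free element of ${\cal M}_a$ taken to be $\rho$, namely
\begin{align}
D(\rho\|{\cal F}_3[\sigma])
=D(\rho\|\Gamma^{(e)}_{{\cal M}_a}[{\cal F}_3[\sigma]])
+D(\Gamma^{(e)}_{{\cal M}_a}[{\cal F}_3[\sigma]]\|{\cal F}_3[\sigma]).
\end{align}
Multiplying by $\gamma$, subtracting $\gamma\log\kappa[\sigma]$, and recognizing the first right-hand term together with the constant as the minimum value already computed in \eqref{XMY} produces \eqref{XMY2UU}, and rewriting that minimum value as $J_\gamma(\Gamma^{(e)}_{{\cal M}_a}[{\cal F}_3[\sigma]],\sigma)$ gives \eqref{XMY2}.

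I do not expect a genuine obstacle here; the entire argument is driven by the single rewriting of $J_\gamma$ as a relative entropy to ${\cal F}_3[\sigma]$. The only points requiring care are quantum-specific: that the operator identity for $\log{\cal F}_3[\sigma]$ holds at the level of matrices (which it does, since ${\cal F}_3[\sigma]$ is positive definite), and that the Pythagorean identity \eqref{NNP} for the quantum relative entropy with respect to the mixture family ${\cal M}_a$ is available, which is the fact quoted from \cite{Amari-Nagaoka}. Everything else is a direct substitution.
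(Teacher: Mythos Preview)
Your proposal is correct and follows essentially the same route as the paper: rewrite $J_\gamma(\rho,\sigma)=\gamma D(\rho\|{\cal F}_3[\sigma])-\gamma\log\kappa[\sigma]$ via the operator identity for $\log{\cal F}_3[\sigma]$, then apply the Pythagorean relation \eqref{NNP} to split off $\gamma D(\rho\|\Gamma^{(e)}_{{\cal M}_a}[{\cal F}_3[\sigma]])$ and read off the minimizer and the stated identities. One small wording slip: in your last paragraph it is the \emph{second} right-hand term (the projection divergence) together with the constant that equals the minimum value in \eqref{XMY}, not the first; the first term is precisely the extra $\gamma D(\rho\|\Gamma^{(e)}_{{\cal M}_a}[{\cal F}_3[\sigma]])$ appearing in \eqref{XMY2UU}--\eqref{XMY2}.
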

\begin{proof}
Since $\log {\cal F}_3[\sigma]=- \log \kappa[\sigma]
+\log\sigma - \frac{1}{\gamma} \Omega[\sigma]$, we have
\begin{align}
&J _\gamma(\rho,\sigma)
=\gamma \Tr \rho (\log \rho- \log \sigma + \frac{1}{\gamma} \Omega[\sigma]) \nonumber\\
=&\gamma \Tr \rho (\log \rho- \log {\cal F}_3[\sigma]- \log \kappa[\sigma]) \nonumber\\
=&\gamma D(\rho\| {\cal F}_3[\sigma])-\gamma \log \kappa[\sigma] \nonumber\\
=&\gamma D(\rho\| \Gamma^{(e)}_{{\cal M}_a}[{\cal F}_3[\sigma]])
+\gamma D(\Gamma^{(e)}_{{\cal M}_a}[{\cal F}_3[\sigma]]\|{\cal F}_3[\sigma]) \nonumber\\
 &- \gamma\log \kappa[\sigma]  .\Label{ASS4}
\end{align}
Then, the minimum is given as \eqref{XMY}, and it is realized with $\Gamma^{(e)}_{{\cal M}_a}[{\cal F}_3[\sigma]]$.

Applying \eqref{XMY} into the final line of \eqref{ASS4},
we obtain \eqref{XMY2UU}.
Since the minimum in \eqref{XMY2UU} is realized when 
$\rho'=\Gamma^{(e)}_{{\cal M}_a}[{\cal F}_3[\sigma]]$, 
we obtain \eqref{XMY2}.
\end{proof}

As a generalization of another part of \cite[Lemma 3.2]{RISB}, we can calculate ${\cal F}_1[\sigma]$ as follows.
\begin{lemma}\Label{L2}
Assume that two density matrices $\rho,\sigma \in {\cal M}_a$ satisfy the 
condition \eqref{BK1+}.
Then, we have ${\cal F}_1[\rho] =\rho$, i.e., 
\begin{align}
J _\gamma(\rho,\sigma)\ge J _\gamma(\rho,\rho) .
\end{align}
\end{lemma}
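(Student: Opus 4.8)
The plan is to reduce the claimed inequality directly to the hypothesis \eqref{BK1+} by a short algebraic rearrangement; no variational or structural argument is needed. First I would evaluate $J_\gamma(\rho,\rho)$. Since the quantum relative entropy satisfies $D(\rho\|\rho)=0$, the definition $J_\gamma(\rho,\sigma)=\gamma D(\rho\|\sigma)+\Tr\rho\,\Omega[\sigma]$ gives immediately $J_\gamma(\rho,\rho)=\Tr\rho\,\Omega[\rho]$, so the divergence term simply drops out.

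Next I would subtract this from $J_\gamma(\rho,\sigma)$. Using the linearity of the trace, the target inequality $J_\gamma(\rho,\sigma)\ge J_\gamma(\rho,\rho)$ becomes equivalent to
\begin{align}
\gamma D(\rho\|\sigma)\ge \Tr\rho\,(\Omega[\rho]-\Omega[\sigma]).
\end{align}
The right-hand side is exactly $D_\Omega(\rho\|\sigma)$ as defined in \eqref{BK1+}, so this is precisely the assumption imposed on the pair $(\rho,\sigma)$, and the inequality follows at once. To read off the statement ${\cal F}_1[\rho]=\rho$, I would note that $\rho$ is itself an admissible point of the minimization (since $\rho\in{\cal M}_a$), and the bound just derived shows that $\sigma\mapsto J_\gamma(\rho,\sigma)$ is never smaller than its value at $\sigma=\rho$; when \eqref{BK1+} is invoked for every $\sigma\in{\cal M}_a$, this identifies $\rho$ as the minimizer.

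I do not expect a genuine obstacle here: the entire content is carried by \eqref{BK1+}, which is engineered so that the gain $\Tr\rho\,(\Omega[\rho]-\Omega[\sigma])$ obtained by replacing $\sigma$ with $\rho$ in the $\Omega$-term is dominated by the divergence penalty $\gamma D(\rho\|\sigma)$. The only point requiring minor care is the bookkeeping in the subtraction and the observation that the $D(\rho\|\rho)=0$ term vanishes, so that the comparison collapses cleanly onto the stated condition rather than onto a weaker or stronger inequality. This lemma is the natural counterpart to Lemma \ref{L1}: together they establish that Algorithm \ref{AL1} is the alternating minimization of $J_\gamma$, which is the mechanism behind the monotone improvement asserted in Theorem \ref{NLT}.
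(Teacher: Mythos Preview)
Your argument is correct and mirrors the paper's proof exactly: both compute $J_\gamma(\rho,\sigma)-J_\gamma(\rho,\rho)=\gamma D(\rho\|\sigma)-\Tr\rho(\Omega[\rho]-\Omega[\sigma])$ using $D(\rho\|\rho)=0$ and then invoke \eqref{BK1+} to conclude. Your added remark that identifying ${\cal F}_1[\rho]=\rho$ strictly requires \eqref{BK1+} for every $\sigma\in{\cal M}_a$ is a valid clarification of the lemma's ``i.e.'' clause.
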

\begin{proof}
Eq. \eqref{BK1+} guarantees that
\begin{align}
&J _\gamma(\rho,\sigma)-J _\gamma(\rho,\rho) \nonumber\\
=&
\gamma D(\rho\|\sigma)-\Tr \rho 
(\Omega[\rho]- \Omega[\sigma])
\ge 0 \Label{XMY5}.
\end{align}
\end{proof}

Now, we prove Theorem \ref{NLT}.
Due to Lemma \ref{L2}, 
when all pairs $(\rho^{(t+1)},\rho^{(t)})$ satisfy \eqref{BK1+}, 
the relations 
\begin{align}
{\cal G}(\rho^{(t)}) &=J _\gamma(\rho^{(t)},\rho^{(t)})\ge
J _\gamma(\rho^{(t+1)},\rho^{(t)}) \nonumber\\
 & \ge J _\gamma(\rho^{(t+1)},\rho^{(t+1)})= {\cal G}(\rho^{(t+1)})
\Label{SAC}
\end{align}
hold under Algorithm \ref{AL1}. 
The relation \eqref{SAC} guarantees that Algorithm \ref{AL1} converges to 
a local minimum.
As a generalization of \cite[Theorem 3.3]{RISB},  
the following theorem discusses the convergence to the global minimum 
and the convergence speed.

Since $\{{\cal G}(\rho^{(t)}) \}$
is monotonically decreasing for $t$,
we have 
\begin{align}
\lim_{n \to \infty}
{\cal G}(\rho^{(t)})-{\cal G}(\rho^{(t+1)})
  =0\Label{AAS}.
\end{align}
Using \eqref{XMY2}, we have
  \begin{align}
&{\cal G}(\rho^{(t)})
=J_{\gamma}(\rho^{(t)},\rho^{(t)})  \nonumber \\
=& \gamma 
 D(\rho^{(t)}\| \rho^{(t+1)})
+
J_{\gamma}( \rho^{(t+1)}, \rho^{(t)}) \nonumber \\
\ge 
& \gamma 
 D(\rho^{(t)}\| \rho^{(t+1)})
+
{\cal G}( \rho^{(t+1)}) .
\end{align}
Thus, we have
\begin{align}
 \gamma  D(\rho^{(t)}\| \rho^{(t+1)})
\le 
{\cal G}(\rho^{(t)})-{\cal G}(\rho^{(t+1)})
  \Label{AAS3}.
\end{align}
Since
due to \eqref{AAS} and \eqref{AAS3},
the sequence $\{{\cal G}(\rho^{(t)})\}$ is a Cauchy sequence, it converges. 
Then, we obtain Theorem \ref{NLT}.

Since Theorem \ref{NLT} guarantees the convergence of $\{\rho^{(t)}\}_t$,
we denote the convergent with the initial point $\rho$ by $C(\rho)$.
When the condition \eqref{BK1+} holds, and
a state $\rho'$ satisfies the condition \eqref{BK2+}, 
Theorem \ref{NLT} guarantees 
\begin{align}
{\cal G}(C(\rho)) \le
{\cal G}(\rho') .
\end{align}

\subsection*{Preparation for proof of Theorem \ref{TH1}}
To show Theorem \ref{TH1}, we prepare the following lemma.
\begin{lemma}\Label{LLX}
For any density matrices $\rho,\sigma \in {\cal M}_a$, we have
\begin{align}
&D(\rho\| \sigma )- D(\rho\| \Gamma^{(e)}_{{\cal M}_a}[{\cal F}_3[\sigma]])  \nonumber\\
=&\frac{1}{\gamma}J_\gamma(\Gamma^{(e)}_{{\cal M}_a}[{\cal F}_3[\sigma]],\sigma) 
-\frac{1}{\gamma}{\cal G}(\rho)
+\frac{1}{\gamma}\Tr \rho  
\Big(\Omega[\rho]-\Omega[\sigma]\Big)
\Label{XM1} \\
=&\frac{1}{\gamma}{\cal G}(\Gamma^{(e)}_{{\cal M}_a}[{\cal F}_3[\sigma]]) -\frac{1}{\gamma}{\cal G}(\rho) 
+
D(\Gamma^{(e)}_{{\cal M}_a}[{\cal F}_3[\sigma]]\|\sigma)
\nonumber \\
&-\frac{1}{\gamma} \Tr \Gamma^{(e)}_{{\cal M}_a}[{\cal F}_3[\sigma]] 
( \Omega[\Gamma^{(e)}_{{\cal M}_a}[{\cal F}_3[\sigma]]]-\Omega[\sigma])\nonumber\\
&+\frac{1}{\gamma}\Tr \rho  
\Big(\Omega[\rho]-\Omega[\sigma]\Big)
.\Label{XM2}
\end{align}
\end{lemma}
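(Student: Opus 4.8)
The plan is to establish the identity \eqref{XM1} first and then obtain \eqref{XM2} as a rewriting of it. For \eqref{XM1}, I would start from the left-hand side $D(\rho\|\sigma)-D(\rho\|\Gamma^{(e)}_{{\cal M}_a}[{\cal F}_3[\sigma]])$ and expand both relative entropies using the definition $D(\rho\|\tau)=\Tr\rho(\log\rho-\log\tau)$. The two $\Tr\rho\log\rho$ terms cancel, leaving $\Tr\rho(\log\Gamma^{(e)}_{{\cal M}_a}[{\cal F}_3[\sigma]]-\log\sigma)$. Now I would substitute the explicit form of $\log\Gamma^{(e)}_{{\cal M}_a}[{\cal F}_3[\sigma]]$, which from the construction equals $\log\kappa_2+\log\sigma-\frac1\gamma\Omega[\sigma]+\sum_j H_j\tau_*^j$. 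Because both $\rho$ and $\Gamma^{(e)}_{{\cal M}_a}[{\cal F}_3[\sigma]]$ lie in ${\cal M}_a$, the linear-constraint terms $\Tr\rho H_j$ agree with the corresponding trace against $\Gamma^{(e)}_{{\cal M}_a}[{\cal F}_3[\sigma]]$; this is the crucial point that lets me replace $\Tr\rho\sum_j H_j\tau_*^j$ by the same expression evaluated at $\Gamma^{(e)}_{{\cal M}_a}[{\cal F}_3[\sigma]]$.

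Next I would assemble the right-hand side of \eqref{XM1} from these pieces. After the substitution, the left side becomes a combination of $-\frac1\gamma\Tr\rho\,\Omega[\sigma]$, a constant $\log\kappa_2$, and the linear term, which I then reorganize. The key is to recognize $\frac1\gamma J_\gamma(\Gamma^{(e)}_{{\cal M}_a}[{\cal F}_3[\sigma]],\sigma)=D(\Gamma^{(e)}_{{\cal M}_a}[{\cal F}_3[\sigma]]\|\sigma)+\frac1\gamma\Tr\Gamma^{(e)}_{{\cal M}_a}[{\cal F}_3[\sigma]]\,\Omega[\sigma]$ from the definition of $J_\gamma$, and to use ${\cal G}(\rho)=\Tr\rho\,\Omega[\rho]$ to introduce the $-\frac1\gamma{\cal G}(\rho)$ and the correction term $\frac1\gamma\Tr\rho(\Omega[\rho]-\Omega[\sigma])$. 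The bookkeeping here is exactly matching the constant and linear contributions against the definitions, and the mixture-membership cancellation is what makes all the $\tau_*^j$ and $\log\kappa_2$ terms collapse correctly.

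For \eqref{XM2}, I would take the expression $\frac1\gamma J_\gamma(\Gamma^{(e)}_{{\cal M}_a}[{\cal F}_3[\sigma]],\sigma)$ appearing in \eqref{XM1} and re-expand it in the opposite direction. Writing $\Gamma^{(e)}_{{\cal M}_a}[{\cal F}_3[\sigma]]=\xi$ for brevity, I have $\frac1\gamma J_\gamma(\xi,\sigma)=D(\xi\|\sigma)+\frac1\gamma\Tr\xi\,\Omega[\sigma]$, and I want to convert the $\frac1\gamma\Tr\xi\,\Omega[\sigma]$ term into $\frac1\gamma{\cal G}(\xi)=\frac1\gamma\Tr\xi\,\Omega[\xi]$ minus the defect $\frac1\gamma\Tr\xi(\Omega[\xi]-\Omega[\sigma])$. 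Adding and subtracting $\Tr\xi\,\Omega[\xi]$ produces precisely the $\frac1\gamma{\cal G}(\xi)$, the $D(\xi\|\sigma)$, and the negative $\frac1\gamma\Tr\xi(\Omega[\xi]-\Omega[\sigma])$ terms of \eqref{XM2}, while the remaining $-\frac1\gamma{\cal G}(\rho)$ and $\frac1\gamma\Tr\rho(\Omega[\rho]-\Omega[\sigma])$ terms carry over unchanged from \eqref{XM1}.

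I expect the main obstacle to be the careful handling of the linear-constraint terms $\sum_j H_j\tau_*^j$ and the normalization constant $\log\kappa_2$ when substituting $\log\Gamma^{(e)}_{{\cal M}_a}[{\cal F}_3[\sigma]]$. These terms must cancel between the two relative entropies, and the cancellation relies essentially on both $\rho$ and $\Gamma^{(e)}_{{\cal M}_a}[{\cal F}_3[\sigma]]$ being members of the same mixture family ${\cal M}_a$, so that their traces against $H_j$ coincide; keeping track of signs and of which quantity the constant attaches to is the delicate bookkeeping, whereas the algebraic rearrangement into the forms \eqref{XM1} and \eqref{XM2} is otherwise routine once those terms are dispatched.
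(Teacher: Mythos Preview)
Your proposal is correct and follows essentially the same route as the paper. The only presentational difference is in how the ``mixture-family cancellation'' is executed: you substitute the explicit expression $\log\Gamma^{(e)}_{{\cal M}_a}[{\cal F}_3[\sigma]]=\log\kappa_2+\log\sigma-\frac{1}{\gamma}\Omega[\sigma]+\sum_j H_j\tau_*^j$ and cancel the $\sum_j H_j\tau_*^j$ term by hand using $\Tr\rho H_j=\Tr\Gamma^{(e)}_{{\cal M}_a}[{\cal F}_3[\sigma]]H_j=a_j$, whereas the paper instead inserts $\pm\log{\cal F}_3[\sigma]$, invokes the Pythagorean identity $D(\rho\|{\cal F}_3[\sigma])=D(\rho\|\Gamma^{(e)}_{{\cal M}_a}[{\cal F}_3[\sigma]])+D(\Gamma^{(e)}_{{\cal M}_a}[{\cal F}_3[\sigma]]\|{\cal F}_3[\sigma])$, and then quotes the formula \eqref{XMY} of Lemma~\ref{L1} to recognize $\gamma D(\Gamma^{(e)}_{{\cal M}_a}[{\cal F}_3[\sigma]]\|{\cal F}_3[\sigma])-\gamma\log\kappa[\sigma]=J_\gamma(\Gamma^{(e)}_{{\cal M}_a}[{\cal F}_3[\sigma]],\sigma)$. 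These are equivalent: the Pythagorean theorem is exactly the abstract statement that the linear-constraint contribution is the same against any element of ${\cal M}_a$. Your passage from \eqref{XM1} to \eqref{XM2} by adding and subtracting $\Tr\xi\,\Omega[\xi]$ is the same algebra the paper performs (it cites \eqref{XMY5} for this step).
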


\begin{proof}

We have
\begin{align}
-\Tr \rho  \Omega[\sigma]
=-{\cal G}(\rho)
+\Tr \rho  
\Big(\Omega[\rho]-\Omega[\sigma]\Big).
\Label{MLP}
\end{align}
Using \eqref{MLP}, 
we have
\begin{align}
&D(\rho\| \sigma)- D(\rho\| \Gamma^{(e)}_{{\cal M}_a}[{\cal F}_3[\sigma]]
)  \nonumber\\
=&\Tr \rho (\log \Gamma^{(e)}_{{\cal M}_a}[{\cal F}_3[\sigma]]- \log \sigma)\nonumber \\
=&\Tr \rho \Big(
\log \Gamma^{(e)}_{{\cal M}_a}[{\cal F}_3[\sigma]]
-\log {\cal F}_3[\sigma]
+\log {\cal F}_3[\sigma]
- \log \sigma \Big) \nonumber\\
\stackrel{(a)}{=}&
D(\rho\| {\cal F}_3[\sigma])
-D(\rho\| \Gamma^{(e)}_{{\cal M}_a}[{\cal F}_3[\sigma]])  \nonumber\\
&+\Tr \rho \Big(
-\frac{1}{\gamma} \Omega[\sigma] -\log \kappa[\sigma] \Big)\nonumber \\
\stackrel{(b)}{=}&
D(\Gamma^{(e)}_{{\cal M}_a}[{\cal F}_3[\sigma]]\| {\cal F}_3[\sigma])
-\log \kappa[\sigma] 
-\frac{1}{\gamma}\Tr \rho  \Omega[\sigma]\nonumber\\
\stackrel{(c)}{=}&
\frac{1}{\gamma}J_\gamma(\Gamma^{(e)}_{{\cal M}_a}[{\cal F}_3[\sigma]],\sigma) 
-\frac{1}{\gamma}{\cal G}(\rho)
+\frac{1}{\gamma}\Tr \rho  
\Big(\Omega[\rho]-\Omega[\sigma]\Big)\nonumber\\
\stackrel{(d)}{=}&
\frac{1}{\gamma}{\cal G}(\Gamma^{(e)}_{{\cal M}_a}[{\cal F}_3[\sigma]]) -\frac{1}{\gamma}{\cal G}(\rho)
+D(\Gamma^{(e)}_{{\cal M}_a}[{\cal F}_3[\sigma]]\|\sigma)
\nonumber\\
&-\frac{1}{\gamma}\Tr \Gamma^{(e)}_{{\cal M}_a}[{\cal F}_3[\sigma]] 
( \Omega[\Gamma^{(e)}_{{\cal M}_a}[{\cal F}_3[\sigma]]]-\Omega[\sigma])\nonumber\\
&+\frac{1}{\gamma}\Tr \rho  
\Big(\Omega[\rho]-\Omega[\sigma]\Big),
\end{align}
where each step is shown as follows.
$(a)$ follows from the definition of ${\cal F}_3(\sigma) $. 
$(c)$ follows from \eqref{XMY} and \eqref{MLP}. 
$(d)$ follows from \eqref{XMY5}. 
$(b)$ follows from the relations 
\begin{align*}
D(\rho\| {\cal F}_3[\sigma])&=D(\rho\| \Gamma^{(e)}_{{\cal M}_a}[{\cal F}_3[\sigma]])
+D(\Gamma^{(e)}_{{\cal M}_a}[{\cal F}_3[\sigma]]\| {\cal F}_3[\sigma]) ,
\end{align*}
which are shown by Phythagorean 
equation \cite{Amari-Nagaoka}, \cite[Proposition 1]{H23}.
\end{proof}

\subsection*{Proof of Theorem \ref{TH1}}
\noindent{\bf Step 1:}\quad
The aim of this step is showing the following inequality;
\begin{align}
D(\rho_{*,a}\| \rho^{(t)})- D(\rho_{*,a}\| \rho^{(t+1)}) \ge \frac{1}{\gamma}{\cal G}(\rho^{(t+1)}) -\frac{1}{\gamma}{\cal G}(\rho_{*,a})\Label{XPZ2}
\end{align}
for $t=1, \ldots, t_0-1$.
We show these relations by induction.

For any $t$, 
by using the relation $ \Gamma^{(e)}_{{\cal M}_a}[{\cal F}_3[\rho^{(t)}]]=\rho^{(t+1)}$, 
the application of \eqref{XM2} of Lemma \ref{LLX} to the case with 
$\sigma=\rho^{(t)}$ and $\rho=
\rho_{*,a}$
yields
\begin{align}
&D(\rho_{*,a}\| \rho^{(t)})- D(\rho_{*,a}\| \rho^{(t+1)})  \nonumber\\
=&\frac{1}{\gamma}{\cal G}( \rho^{(t+1)}) 
-\frac{1}{\gamma}{\cal G}(\rho_{*,a}) 
+D(\Gamma^{(e)}_{{\cal M}_a}[{\cal F}_3[\rho^{(t)}]]\|\rho^{(t)})\nonumber\\
&-\frac{1}{\gamma}\Tr  \Gamma^{(e)}_{{\cal M}_a}[{\cal F}_3[\rho^{(t)}]] 
( \Omega[\Gamma^{(e)}_{{\cal M}_a}[{\cal F}_3[\rho^{(t)}]]-\Omega[\rho^{(t)}])
\nonumber\\
&+\frac{1}{\gamma}\Tr  \rho_{*,a}  
\Big(\Omega[\rho_{*,a}]-\Omega[\rho^{(t)}]\Big) \\
=&\frac{1}{\gamma}{\cal G}( \rho^{(t+1)}) 
-\frac{1}{\gamma}{\cal G}(\rho_{*,a}) 
+D(\rho^{(t+1)}\|\rho^{(t)})
\nonumber\\
&-\frac{1}{\gamma}\Tr  \rho^{(t+1)} 
( \Omega[\rho^{(t+1)}]-\Omega[\rho^{(t)}])
\nonumber\\
&+\frac{1}{\gamma}\Tr  \rho_{*,a}  
\Big(\Omega[\rho_{*,a}]-\Omega[\rho^{(t)}]\Big)
.\Label{XME2}
\end{align}
Since two densities $\Gamma^{(e)}_{{\cal M}_a}[{\cal F}_3[\rho^{(t)}]]$ and 
$\rho^{(t)}$ satisfy the conditions \eqref{BK1+} and \eqref{BK2+},
we have
\begin{align}
\hbox{\rm(RHS of \eqref{XME2})} 
\ge 
\frac{1}{\gamma}{\cal G}(\rho^{(t+1)}) -\frac{1}{\gamma}{\cal G}(\rho_{*,a}).\Label{AXE}
\end{align}
The combination of \eqref{XME2} and \eqref{AXE} implies \eqref{XPZ2}.

\noindent{\bf Step 2:}\quad
The aim of this step is to show \eqref{XME}.
Lemmas \ref{L1} and \ref{L2}
guarantee that
\begin{align}
{\cal G}(\rho^{(t+1)}) \le {\cal G}(\rho^{(t)}) \Label{AMK}.
\end{align}
We have
\begin{align}
& \frac{t_0}{\gamma} \Big(
{\cal G}(\rho^{(t_0+1)}) - {\cal G}(\rho_{*,a}) \Big)
\stackrel{(a)}{\le}  \frac{1}{\gamma}\sum_{t=1}^{t_0}
{\cal G}(\rho^{(t+1)}) -{\cal G}(\rho_{*,a}) 
\nonumber\\
\stackrel{(b)}{\le}& \sum_{t=1}^{t_0}
D(\rho_{*,a}\| \rho^{(t)})- D(\rho_{*,a}\| \rho^{(t+1)})  
\nonumber\\
=& 
D(\rho_{*,a}\| \rho^{(1)})-
D(\rho_{*,a}\| \rho^{(t_0+1)})  
\le D(\rho_{*,a}\| \rho^{(1)}),
\end{align}
where $(a)$ and $(b)$ follow from \eqref{AMK} and \eqref{XPZ2}, respectively.

\begin{remark}
When the condition \eqref{BK2+} does not hold,
the above proof does not work.
However, when 
$D(\rho^{(t+1)}\|\rho^{(t)})
-\frac{1}{\gamma}\Tr  \rho^{(t+1)} 
( \Omega[\rho^{(t+1)}]-\Omega[\rho^{(t)}])
+\frac{1}{\gamma}\Tr  \rho_{*,a}  
\Big(\Omega[\rho_{*,a}]-\Omega[\rho^{(t)}]\Big)\ge 0$,
the above proof does work.
Maybe, there is a possibility 
that this proof works locally with sufficiently large $\gamma$.
\end{remark}

\subsection*{Derivation of \eqref{NMD}}
For a state $\sigma_{TX}$,
we have
\gl{
\begin{align}
&\Tr\sigma_{TX} {\Omega}_{\alpha,\beta}[\sigma_{TX}] \nonumber \\
=&
\Tr\sigma_{TX} 
\Big(\alpha \log \sigma_{TX}
-\alpha \log \rho_{X}
+(\beta-1) \log \sigma_{T}
\nonumber\\
& +\beta 
 {\cal R}^* ( \log {\cal R}(\rho_{X}) 
- \log {\cal R}( \sigma_{TX}) ) \Big) \nonumber \\
=&
\alpha \Tr\sigma_{TX} 
\Big( \log \sigma_{TX}
- \log \rho_{X}
- \log \rho_{T}\Big)
\nonumber\\
& -(1-\alpha)\Tr\sigma_{TX} \log \sigma_{T} \nonumber \\
&+\beta \Tr\sigma_{TX}\Big(
 \log \sigma_{T} 
+
 {\cal R}^* ( \log {\cal R}(\rho_{X}) 
- \log {\cal R}( \sigma_{TX}) ) \Big) \nonumber \\
=&\alpha I(T;X)[\sigma_{TX}]+(1-\alpha)H(T)[\sigma_{TX}]
\nonumber\\
&-\beta I(T;Y)[{\cal R}(\sigma_{TX})]
\end{align}}
because $\Tr\sigma_{TX} \log \sigma_{T}=\Tr\sigma_{T} \log \sigma_{T}$.
Hence, we obtain \eqref{NMD}.

\subsection*{Derivation of \eqref{MMS}}
First, we have
\begin{align}
D( {\cal R}(\sigma_{TX} )\|{\cal R}(\sigma_{TX}'))
\ge 
D( P_{T} \|Q_{T}).
\end{align}
Then, we have
\gl{\begin{align}
& D_{\Omega_{\alpha,\beta}}( \sigma_{TX} \|\sigma_{TX}') \nonumber \\
=& (\beta-1)D( \sigma_{T} \|\sigma_{T}')
+\alpha D( \sigma_{TX}\|\sigma_{TX}')
\nonumber\\
&-\beta D( {\cal R}(\sigma_{TX} )\|{\cal R}(\sigma_{TX}'))
\nonumber \\
\le & -D( \sigma_{T} \|\sigma_{T}')
+\alpha D( \sigma_{TX}\|\sigma_{TX}') 
\le 
\alpha D( \sigma_{TX}\|\sigma_{TX}').
\end{align}}
Hence, we have \eqref{MMS}.

\bigskip
\noindent\textbf{DATA AVAILABILITY}\\	
The data that support the findings of this study are
available from the corresponding author upon request.

\bigskip

\noindent\textbf{CODE AVAILABILITY}\\	
The codes used for numerical analysis are available from
the corresponding author upon request.

\bigskip
\noindent\textbf{ACKNOWLEDGEMENTS}\\	
MH was supported in part by the National Natural Science Foundation of China under Grant 62171212.

\bigskip
\noindent\textbf{AUTHOR CONTRIBUTIONS}\\	
MH proposed this project. MH
derived the theoretical results. 
MH and GL designed the numerical experiments.
GL performed numerical experiments and plotted the figures. 
MH and GL wrote the paper.

\bigskip
\noindent\textbf{COMPETING INTERESTS}\\	
The authors declare no competing interests.

\end{document}